\DeclarePairedDelimiter\floor{\lfloor}{\rfloor}
\newcommand{\ie}{\textit {i.e., }}
\newcommand{\m}{\mathrm{m}}
\newcommand{\f}{\mathrm{f}}
\newcommand{\x}{\pmb{x}}
\newcommand{\y}{\pmb{y}}
\newcommand{\z}{\pmb{z}}
\newcommand{\expect}{\mathbb{E}}
\newcommand{\dd}{\mathrm{d}}
\newcommand{\tb}{T_{\mathrm{b}}}
\newcommand{\hp}[1]{\widetilde{\mathsf{p}}_{#1}}
\newcommand{\ehp}[1]{\mathsf{p}_{#1}}
\newcommand{\hpdiff}[1]{\mathsf{q}_{#1}}
\newcommand{\lhpdiff}[1]{\mathcal{Q}_{#1}}
\newcommand{\hpone}{\overline{\mathsf{p}}}
\newcommand{\lhp}[1]{\widetilde{\mathcal{P}}_{#1}}
\newcommand{\lehp}[1]{\mathcal{P}_{#1}}
\newcommand{\lhpone}{\overline{\mathcal{P}}}
\newcommand{\Amat}{\widetilde{\mathcal{A}}}
\newcommand{\Aemat}{\mathcal{A}}
\newcommand{\erfc}{\mathrm{erfc}}
\newcommand{\Pmat}{\widetilde{\mathsf{P}}}
\newcommand{\bPmat}{\mathbf{P}}
\newcommand{\Pemat}{\mathsf{P}}
\newcommand{\Rval}{R^{'}}
\newcommand{\expectA}[1]{\mathbb{E}\left[#1\right]}
\renewcommand{\bPmat}{\bm{\mathcal{P}}}
\newtheorem{theorem}{Theorem}
\newtheorem{lemma}{Lemma}
\newtheorem{coro}{Corollary}[theorem]
\newcommand{\figsize}{3.4in}
\newcommand{\invlaplace}[2]{\mathcal{L}^{-1}_{#1}\left[#2\right]}
\begin{document}
	\title{Channel Characterization and Performance of a 3-D Molecular Communication System with Multiple Fully-Absorbing Receivers}
		\author{Nithin V. Sabu, Abhishek K. Gupta, Neeraj Varshney and Anshuman Jindal
		\thanks{ N. V. Sabu, A. K. Gupta and A. Jindal are with Indian Institute of Technology Kanpur, Kanpur UP 208016, India (Email:{ \{nithinvs,gkrabhi,anshuji\}@iitk.ac.in}). N. Varshney is with the Wireless Networks Division, National Institute of Standards and Technology, Gaithersburg, MD 20899 USA (Email: {neerajv@ieee.org}). 
			This research was supported by the Science and Engineering
			Research Board (India) under the grant SRG/2019/001459 and IITK under the grant IITK/2017/157. A part of this paper will be presented at the IEEE ICC Workshop,  Seoul, South Korea, May 2022 \cite{sabu2022c}.}}
	\maketitle
	\begin{abstract}
Molecular communication (MC) can enable the transfer of information between nanomachines using molecules as the information carrier. In MC systems, multiple receiver nanomachines often co-exist in the same communication channel to serve common or different purposes. However, the analytical channel model for a system with multiple fully absorbing receivers (FARs), which is significantly different from the single FAR system due to the mutual influence of FARs, does not exist in the literature. The analytical channel model is essential in analyzing systems with multiple FARs, including MIMO, SIMO, and cognitive molecular communication systems. In this work, we derive an analytical expression for the hitting probability of a molecule emitted from a point source on each FAR in a diffusion-based MC system with $ N $ FARs. Using these expressions, we derive the channel model for a SIMO system with a single transmitter and multiple FARs arranged in a uniform circular array (UCA). We then analyze the communication performance of this SIMO system under different cooperative detection schemes and develop several interesting insights.
	\end{abstract}

\begin{IEEEkeywords}
	Molecular communication, fully-absorbing receivers, UCA, cooperative detection, soft combining, hard combining.
\end{IEEEkeywords}

\section{Introduction}
 Molecular communication (MC) is a promising solution for enabling the transfer of information in several scenarios considered impractical for conventional electromagnetic wave-based communication (EMC). For example, MC is expected to perform better than EMC inside tunnels, in saline environments, and inside the human body \cite{farsad23,guo2015}. In MC, the transmitter conveys the information to the receiver by utilizing molecules, termed as \textit{information molecules} (IMs), as the carrier of the information. Among different propagation mechanisms, the diffusion-based propagation mechanism is the most studied one in the literature due to the ease of mathematical modeling, energy efficiency, and lack of requirement for communication infrastructure for carrying IMs from the transmitter to the receiver.

In MC, the transmitters and receivers are usually nanomachines with nanoscale functional parts. The functionality of a single nanomachine is limited to simple actuation, sensing, and storage \cite{nakano2013}. However, complex tasks like target detection and event sensing can be performed via the cooperation of multiple nanomachines. Such systems with multiple nanomachines working together can help to realize the internet of nano \cite{akyildiz2010}, and bio-nano things (IoNT and IoBNT) \cite{akyildiz2015}.
 
 \subsection{Related Work}\label{relw}
For an MC system with a point transmitter and a fully-absorbing receiver (FAR), the analytical expression for the hitting probability was derived in \cite{yilmaz2014,schulten2000lectures}. As mentioned above, multiple receiver nanomachines can exist in an MC to enable complex tasks. Therefore, the channel model for an MC system when multiple receivers are present is of crucial importance. Many works \cite{koo2016, damrath2018,lee2017,ahuja2021,gursoy2019a} employed fitting models, machine learning-based approaches, and particle-based simulations to obtain the channel model. However, an exact analytical equation for the hitting probability for MC systems involving multiple fully absorbing receivers in 3-D is not present in the past literature. \cite{twoway} derived an approximate analytical expression for hitting probability (with two unknowns to be computed numerically) for an MCvD system with two FARs. In our past work \cite{sabu2020a}, we have derived an approximate analytical expression for the hitting probability of an IM on each of the FAR of the same size in a $ 2-$FAR system. The approximate expression was simple enough so that several design insights were discussed in \cite{sabu2020a}. Later, this work was extended to derive the hitting probability of IM on each FAR with different sizes \cite{sabu2021}. For MC systems involving more than two FARs, an approximate expression for the hitting probability of an IM on \textit{any one} of the FARs was derived in \cite{sabu2020}. However, the expression for the hitting probability on each of the FARs is still missing from the current literature. Systems with multiple receivers may improve the performance of the MC systems \cite{koo2016, damrath2018,lee2017,ahuja2021,gursoy2019a}. In MC systems with multiple receivers, cooperative detection can be used to improve the error performance  \cite{yutingfang2017}. In a cognitive MC system \cite{sabu2021}, multiple primary and secondary receivers may be present.  The analysis of these systems with multiple FARs will require a channel model to start. Therefore, it is essential to analyze and characterize the channel in MC systems with multiple receivers. According to the authors knowledge, no work in the literature provides an analytical expression for the 3-D channel model of systems with multiple FARs (\ie more than two FARs) using the same type of IM. Obtaining hitting probability using machine learning and particle-based simulation methods \cite{koo2016, damrath2018,lee2017,ahuja2021,gursoy2019a}  is time-consuming and inconvenient. In this work, we bridge this gap by deriving the analytical hitting probability on each FAR in a multi-FAR system and characterizing the channel in this system. 
 
 \subsection{Contributions}
In this work, we consider a 3-D diffusion-based MC system with multiple FARs and derive the hitting probability of an IM on each of the FARs. We also validate the accuracy of the derived equations using particle-based simulations. In contrast to \cite{sabu2020a} which only considers two FARs, this work considers $N$ FARs and includes the necessary modifications in the differential equations to incorporate the mutual influence of all the FARs. We also analyze a single-input multiple-output (SIMO) system comprising a uniform circular array (UCA) of FARs based on cooperative detection schemes. Several important design insights are also provided based on the derived expressions.

The novel contributions of this work are listed below.

	\begin{itemize}
\item First, we present a set of analytical equations for the exact hitting probability at each FAR in a system with $ N $ FARs in 3-D for any arbitrary value of $N$, which involves the distribution of hitting point on each FAR's surface. We then present a more tractable expression for the hitting probability by approximating these hitting points. We then present a recursive method that can be convenient when the hitting probability for subsystems is already known. We also present error bound and several simplified expressions under special cases. 
\item  To demonstrate the applicability of the derived expressions, we consider a SIMO communication system comprised of a point transmitter and a UCA of FARs as a compound receiver and present the  analytical expressions for the hitting probability. We also derive the upper and lower bounds for the exact channel response and present various design insights.
\item Using these expressions, we derive the channel model for the considered SIMO system and present various design insights, including the asymptotic signal gain of UCA. 
\item Using the derived channel model, we analyze the communication performance of the considered SIMO system with a cooperative detection scheme and compare the performance of both soft decision and hard decision rules. We also quantify how the bit error performance changes when the number of FARs increases.
\end{itemize}



\section{System Model}\label{secsys}

    In this work, we consider a diffusion based MC system with a point transmitter and $N$ spherical FARs located  at different positions in a 3D medium as shown in Fig. \ref{Fig:sm} (a). Let the transmitter be located at the origin and $N$ FARs of the same radius $ a $ be at positions $ \x_{1}, \x_{2} $, $ \x_{3} $,..., $ \x_{N} $ respectively. The distance between the transmitter and the $i$th FAR FAR$_i$ located at $ \x_i $ is represented by $ r_i $. The angular difference between the centers of the 
FAR$_i$ and FAR$_j$
is denoted by $ \phi_{ij} $. Let $ \y_i $ denote the closest point of FAR$ _i $ from the transmitter.  Also, $ R_{ij} $ represents the distance between the closest point of FAR$_i  $ from the origin (\ie $ \y_i $) and the center of FAR$ _j$. \ie $ R_{ij}=\sqrt{(r_i-a)^2+r_j^2-2(r_i-a)r_j\cos\phi_{ij}} $.
The transmitter emits IMs, which propagate through the medium via Brownian motion \cite{einstein2011}.  The diffusion coefficient of the IM is represented by $ D $, which is assumed to be constant over space and time. The IMs are detected by a FAR when they hit the surface of that particular receiver.

	\begin{figure}[ht]
		\centering   
		\includegraphics[width=0.4\linewidth]{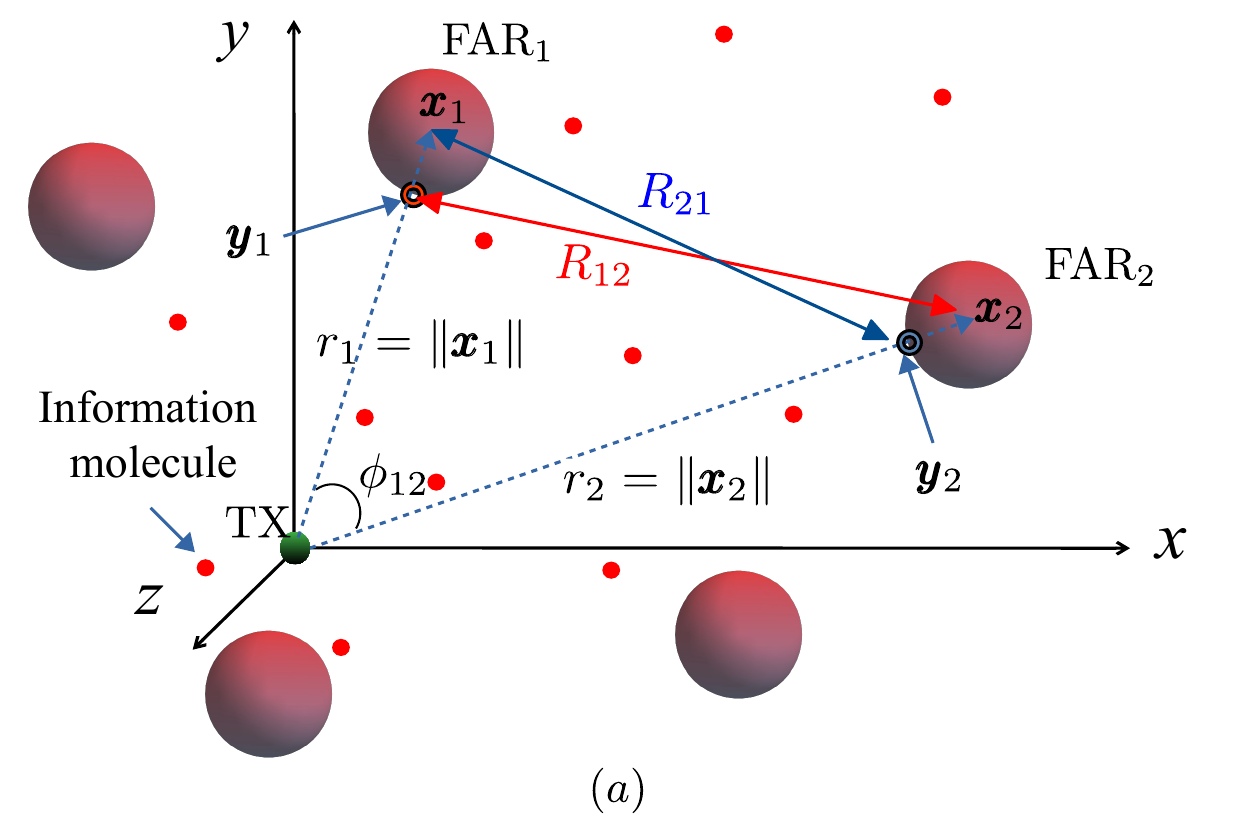}\includegraphics[width=0.4\linewidth]{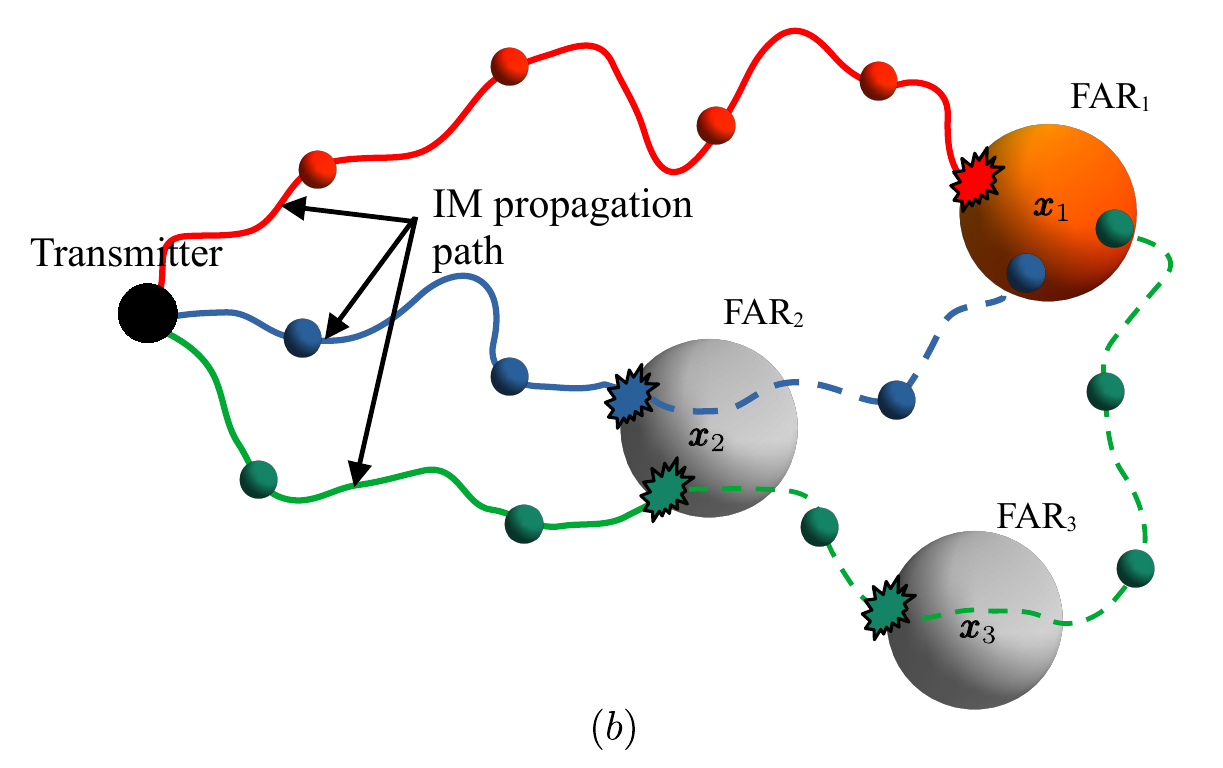}
		\caption{ An illustration showing (a) a 3D MC  system with a point transmitter at the origin and FARs located at positions $\x_{1},\ \x_{2},\cdots, \x_{N}$ in $ \mathbb{R}^3 $,  
		(b) different paths a molecule can take to reach a receiver FAR$_1$. The path of IMs hitting FAR$_1$ first is represented by the red path. The paths that hit any other FAR first and then hit FAR$_1$ if the other FARs are not present are represented by the blue and green paths.
		}
		\label{Fig:sm}
	\end{figure}

	When there are no other FARs present except the $i$th FAR, the hitting probability $ \hpone(t,r_i) $ of an IM at the surface of the $ i $th FAR within time $t$
is given by \cite{yilmaz2014,schulten2000lectures}
	\begin{align}
		\hpone(t,r_i)=\frac{a}{r_i}\erfc\left(\frac{r_i-a}{\sqrt{4Dt}}\right).\label{eq1rx}
	\end{align}
	 
Fig. \ref{Fig:sm} (b) shows different paths a molecule can take to reach a receiver FAR$_1$. In the absence of other FARs, all red/green/blue paths shown lead to hitting FAR$_1$. When other FARs are present, it causes competition in capturing the IMs among the FARs. For example, in Fig. \ref{Fig:sm} (b), IMs following only the red path will hit FAR$_1$. Therefore, the hitting probability reduces in the presence of other FARs. For the case when there are only two FARs located at $ \x_1 $ and $ \x_2 $  present in the medium  (\ie $ N=2 $), an approximate analytical hitting probability until time $ t $ of an IM on FAR$ _1 $ was given in our previous work \cite{sabu2020a} as
	\begin{align}
		\hp{1}(t,\x_1\mid \x_2)=
		&\sum_{n=0}^{\infty} \frac{a^{2 n}}{R_{12}^{n} R_{21}^{n}}\left[\frac{a}{r_{1}} \erfc\left(\frac{r_{1}{-}a+n\left(R_{21}{-}a\right)+n\left(R_{12}{-}a\right)}{\sqrt{4 D t}}\right)\right. \nonumber\\
		&\left.-\frac{a^{2}}{r_{2} R_{21}} \erfc\left(\frac{r_{2}{-}a+(n+1)\left(R_{21}{-}a\right)+n\left(R_{12}{-}a\right)}{\sqrt{4 D t}}\right)\right]\le \hpone(t,r_i).
		\label{eq2rx}
	\end{align}
	
    The hitting probability reduces further with an increase in $N$ due to the increase in the mutual influence of FARs, which is caused by the increasing competition in capturing the IMs by the FARs. However, the expression for the general $N$- FAR case is not available in the literature. The hitting probability expressions for $ N>2 $ can be used to analyze SIMO, multiple-input multiple-output (MIMO), and cognitive molecular communication systems. In the following sections, we derive the hitting probability (or fraction of IMs absorbed) at each of the FARs for an $ N-$FAR system. To demonstrate the applicability, the derived equations are then applied to a SIMO system based on co-operative detection to derive its BER performance under various receiver combining schemes, and the error performance is analyzed.

Let us denote the exact hitting probability until time $ t $ of an IM on FAR$ _i $ in the presence of the other $ N-1 $ FARs by $ \ehp{}\left(t,\x_i\mid \{\x_j\}_{j=1,\ j\neq i}^N\right) $ or just $ \ehp{i}(t) $ when there is no ambiguity. The approximate hitting probability until time $ t $ of an IM on FAR$ _i $ in the presence of the other $ N-1 $ FARs is denoted by $ \hp{}\left(t,\x_i\mid \{\x_j\}_{j=1,\ j\neq i}^N\right) $ or just $ \hp{i}(t) $ when there is no ambiguity. Without the loss of generality, we will take $ i=1 $ and derive the hitting probability at FAR$ _1 $ denoted by $ \hp{}\left(t,\x_1\mid \{\x_j\}_{j=2}^N\right) $ or just $ \hp{1}(t) $. Let $ \mathcal{L}\left[.\right] $ and $ \mathcal{L}^{-1}\left[.\right] $ denote the Laplace transform and inverse Laplace transform, respectively.
	
\section{Hitting Probability for an $N-$FAR System}\label{nfarsec}
In this section, we derive the  hitting probability expression for an  $ N-$FAR system for an arbitrary $N$. 
Note that the difference $\hpone(t,r_i) - \ehp{i}(t) $ gives the probability that an IM takes a path that would have hit FAR$_i$ in the absence of other FARs, but hits a FAR other than FAR$_i$ first when in the presence of other FARs. This event corresponds to IMs taking a green or blue path shown in Fig. \ref{Fig:sm} (b).
Let us consider a path where the IM hits the FAR$_j$ first. The hitting point is a random point $\z_j$ on the surface of FAR$_j$. This path can be divided into two segments. The first corresponds to the segment from the transmitter to the FAR$_j$ (denoted by solid line). The probability that an IM emitted from the point source at the origin first hits the surface of FAR$_{j} $ in the interval $[\tau,\tau+\dd\tau] $ is $\frac{\partial \ehp{j}(\tau)}{\partial \tau} \dd\tau$. Further, the second segment corresponds to the segment the IM would have followed to reach FAR$_i$ if it were not absorbed by FAR$_j$ in the first segment despite hitting it or by any FAR it hits during this segment. It is evident that the presence of FARs is ignored for this segment. This segment starting from $\z_j$ and ending at FAR$_i$ is denoted by dashed path. The corresponding probability that the IM reaches FAR$_i$ in $t-\tau$ time taking this path is $\expect_{z_j}\left[\hpone(t-\tau,\Rval_{ji})\right]$, where $\Rval_{ji}=\|\z_j-\x_i\|$.

 Therefore, the probability that an IM would have hit FAR$_i$ in the absence of other FARs, but hits a FAR other than FAR$_i$ first when in the presence of other FARs, is given as
\begin{align}
	\hpone(t,r_i) - \ehp{i}(t)   =&  \sum_{j=1,\ j\neq i}^{N} 
	\int_{0}^{t} \frac{\partial \ehp{j}(\tau)}{\partial \tau}\expect_{z_j}\left[\hpone(t-\tau,\Rval_{ji})\right] 	\dd\tau\label{eqtnxra} .
\end{align}
Now, taking the Laplace transform of both sides of  \eqref{eqtnxra} gives
\begin{align}
	\lhpone(s,r_i) -\lehp{i}(s)  =  \sum_{j=1,\ j\neq i}^{N} 
	s\lehp{j}(s)\expect_{z_j}\left[ \lhpone(s,\Rval_{ji})\right] , \ \ \forall\ i.\label{exsnxr} 
\end{align}
Here $\lehp{i}(s)$ and $\lhpone(s,x),\ x\in\{r_i,\Rval_{ij}\}\ $ are the Laplace transforms of $\ehp{i}(t)$ and $\hpone(t,x)$ respectively. Note that,
\begin{align}
	\lhpone(s,x)=\frac{a}{sx}\exp\left(-\left(x-a\right)\sqrt{\frac{s}{D}}\right).\label{lapeq}
\end{align}  
Now stacking \eqref{exsnxr} for all $ i $ and solving further for	$ \Pemat(t)=\left[ \begin{array}{c} \ehp{1}(t),\  \ehp{2}(t), \cdots ,\ \ehp{N}(t) \end{array} \right]^{\mathrm{T}} $ gives the theorem given below.
\begin{theorem}\label{exfar}
	In an MC system with $N$ FARs, the exact probability that an IM emitted from the point source at the origin  hits each of the $\mathrm{FAR}$s within time $ t $ in the presence of other $N-1$ FARs is 
	\begin{align}
		\Pemat(t)=\mathcal{L}^{-1}\left(\Aemat^{-1}(s)\bPmat(s)\right),\label{eqexact}
	\end{align}
	where
	\begin{align*}
			\Aemat(s)&=
	{\small	\begin{bmatrix} 
			1 & s\expect_{z_2}\left[\lhpone(s,\Rval_{21})\right]  & \cdots & s\expect_{z_N}\left[\lhpone(s,\Rval_{N1})\right] 	\\
			s\expect_{z_1}\left[\lhpone(s,\Rval_{12})\right]& 1 & \cdots & s\expect_{z_N}\left[\lhpone(s,\Rval_{N2})\right]\\ 
			\vdots &\vdots &\ddots&\vdots \\
			s\expect_{z_1}\left[\lhpone(s,\Rval_{1N})\right] & s\expect_{z_2}\left[\lhpone(s,\Rval_{2N})\right] & \dots & 1 \\
		\end{bmatrix}}\nonumber\\
		 \text{ and } \bPmat(s)&=\left[ \begin{array}{cc} \lhpone(s,r_1),\ 	\lhpone(s,r_2)  ,\ \cdots ,\ \lhpone(s,r_{N}) \end{array} \right]^{\mathrm{T}}.
	\end{align*}
\end{theorem}
\begin{proof}
	The system of equations given in \eqref{exsnxr} can be represented in matrix form as  $\Aemat(s)\Pemat(s)=\bPmat(s)$. Hence, the solution can be obtained as
	\begin{align}
		\Pemat(s)=\Aemat^{-1}(s)\bPmat_{}(s).\label{exmateqn}
	\end{align}
	Finally, taking the inverse Laplace transform of both sides of \eqref{exmateqn} gives \eqref{eqexact}. 
\end{proof}

 	To solve the expression in \eqref{eqexact} further to obtain the exact hitting probability, we require the distribution of hitting point $ \z_j $ at the surface of FAR$_j$. However, the distribution is difficult to obtain, and hence, we need to approximate the hitting point.
 	 
 	 The actual hitting point $ \z_j $ of an IM at a FAR$ _j $ can be approximated by the closest point on the surface of the FARs as seen from the transmitter at the origin (\ie $ \y_j $). Let us denote $ \Pmat(t)=\left[  \hp{1}(t), \ \hp{2}(t),\  \cdots, \ \hp{N}(t) \right]^{\mathrm{T}} $. The hitting probability equation using the approximation mentioned above is given in the theorem below.
 	 \begin{theorem}\label{tnfar}
 	 	
 	 	In an MC system with $N$ FARs, the probability that an IM emitted from the point source at the origin  hits each of the $\mathrm{FAR}$s within time $ t $ in the presence of other $N-1$ FARs is approximately given as 
 	 	\begin{align}
 	 		\Pemat(t)\approx\Pmat(t)=\mathcal{L}^{-1}\left(\Amat^{-1}(s)\bPmat(s)\right), \text{where }	\Amat(s)=
 	 		{\small\begin{bmatrix} 
 	 			1 & s\lhpone(s,R_{21}) &  \cdots & s\lhpone(s,R_{N1}) 	\\
 	 			s\lhpone(s,R_{12}) & 1 &  \cdots & s\lhpone(s,R_{N2})\\ 
 	 			\vdots &\vdots &\ddots&\vdots \\
 	 			s\lhpone(s,R_{1N}) & s\lhpone(s,R_{2N}) &  \dots & 1 \\
 	 		\end{bmatrix}}.\label{eqnxr}
 	 	\end{align}
 	 \end{theorem}
 	 \begin{proof}
 	 	See Appendix \ref{anxr}.
 	 \end{proof}

\subsection{ Special Case: Systems with $N=3$ FARs}
We can simplify the derived expression \eqref{eqnxr} for special cases such as for specific values of $ N $. In this section, we consider the special case with $ N=3 $ FARs. The following corollary provides an approximate analytical equation for $\hp{1} (t)$.
\begin{coro}\label{t3rx}
	In an MC system with three FARs located at $\x_1,\ \x_2$ and $\x_3$, the probability that an IM emitted from the point source at the origin hits  $\mathrm{FAR} _1 $ within time $ t $ in the presence of other two FARs is approximately given by
	\begin{align}
		\ehp{1}(t)\approx\hp{1}(t)&=\mathcal{L}^{-1}\left[\lhp{1}(s)\right],\label{eq3rx}
		\text{ where } 
		\lhp{1}(s)= \frac{\lhpone(s,r_i) -s\alpha_1(s)+s^2\beta_1(s)}
		{1-s^2\gamma(s)+s^3\delta(s)}, \\
			\text{with }\alpha_1(s)&=\lhpone(s,r_2)\lhpone(s,R_{21}){+}\lhpone(s,r_3)\lhpone(s,R_{31})\nonumber\\
			\beta_1(s)&=-\lhpone(s,r_1)\lhpone(s,R_{23})\lhpone(s,R_{32})+\lhpone(s,r_2)\lhpone(s,R_{23})\nonumber\\
			&\qquad\qquad\lhpone(s,R_{31})+\lhpone(s,r_3)\lhpone(s,R_{32})\lhpone(s,R_{21})\nonumber\\
			\gamma(s)&=\lhpone(s,R_{12})\lhpone(s,R_{21}) + \lhpone(s,R_{32})\lhpone(s,R_{23})
			+\lhpone(s,R_{13}) \lhpone(s,R_{31})\nonumber\\
			\text{and }\  
			\delta(s)&=\lhpone(s,R_{12}) \lhpone(s,R_{23})\lhpone(s,R_{31})
			+ \lhpone(s,R_{13})\lhpone(s,R_{32})\lhpone(s,R_{21}).\nonumber
	\end{align}
\end{coro}
Note that, in Corollary \ref{t3rx}, $ \alpha_1(s) $ corresponds to the motion of IMs via FAR$ _2 $ to FAR$ _1 $ and via FAR$ _3 $ to FAR$ _1 $. This term double counts the IMs that went to both FAR$ _2 $ and FAR$ _3 $ (including IMs that first went to FAR$ _2 $ and IMs that first went to FAR$ _3 $). This double counting is negated by  $ \beta_1(s) $. The first term in $ \beta_1(s) $ corresponds to the IMs that went to FAR$ _1 $ first and then went to FAR$ _2 $ and FAR$ _3 $. The IMs can go in loops between FAR$ _2 $ and FAR$ _3 $ before hitting FAR$ _1 $ and such IMs are factored in by the term $ \gamma(s) $, whereas the double counting in $ \gamma(s) $ is negated using the term $ \delta(s) $.

\subsection{Mutual Influence}\label{mutinf}
The mutual influence of FARs in capturing the IM (denoted by $ \hpdiff{i}(t) $) can be  characterized by the decrease in the hitting probability of an individual FAR due to the presence of additional FARs.
	This is equal to the fraction of IMs that are supposed to hit FAR$ _i $ within time $ t $, but are hitting the other FARs first due to their presence in the same communication medium. \ie
	\begin{align}
		\hpdiff{i}(t)=\hpone(t,r_i)-\ehp{i}(t).\label{ediff}
	\end{align}
A higher value of $ \hpdiff{i}(t) $ represents that a higher number of IMs that were supposed to hit FAR$ _i $ were captured by other FARs. The following result provides an upper bound on the mutual influence.
\begin{coro}
	The upper bound for the mutual influence $ \hpdiff{i}(t) $ of FAR$ _i $ is given by 
	\begin{align}
		\hpdiff{i}(t)\leq \sum_{j=1,\ j\neq i}^{N}\frac{a^2}{r_j\left(\|\x_j-\x_i\|-a\right)}\erfc\left(\frac{r_j+\|\x_j-\x_i\|-3a}{\sqrt{4Dt}}\right)\label{mr29}.
	\end{align}
\end{coro}
\begin{proof}
	Using \eqref{ediff} and \eqref{exsnxr}, the Laplace transform of $ \hpdiff{i}(t) $ (represented by $  \lhpdiff{i}(s)  $) can be upper bounded as
	\begin{align}
		\lhpdiff{i}(s){=}\lhpone(s,r_i) -\lehp{i}(s) {= } \sum_{j=1,\ j\neq i}^{N} 
		s\lehp{j}(s) \expect_{z_j}\left[ \lhpone(s,\Rval_{ji})\right]{\stackrel{(a)}{\leq}}\sum_{j=1,\ j\neq i}^{N} 
		s\lhpone(s,r_j) \lhpone(s,\|\x_j-\x_i\|-a),\label{mr28}
	\end{align}
	where $ (a) $ is due to the inequalities $\lehp{j}(s)\leq \lhpone(s,r_j) $ and $ \expect_{z_j}\left[ \lhpone(s,\Rval_{ji})\right]\leq \lhpone(s,\|\x_j-\x_i\|-a) $. Now, taking the inverse Laplace transform of \eqref{mr28} gives \eqref{mr29}.
\end{proof}
Since the erfc(.) function decreases faster than exponential, the mutual influence reduces at least exponentially with the distance between FARs (\ie $ \|\x_j-\x_i\| $).
\begin{coro}\label{rinf}
	In an MC system with $N$ FARs located sufficiently far from each other (\ie $ R_{ij}\rightarrow \infty,\forall\  i,j $), the mutual influence of FARs perishes, and the hitting probability for an IM (fraction of IMs absorbed) on each FAR is the same as \eqref{eq1rx}.
\end{coro}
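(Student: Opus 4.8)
The plan is to obtain this corollary directly from the matrix identity of Theorem~\ref{tnfar}, $\Pmat(t)=\mathcal{L}^{-1}(\Amat^{-1}(s)\bPmat(s))$, by tracking how the geometry of the $N-1$ interfering FARs enters it. The diagonal of $\Amat(s)$ is identically $1$, the vector $\bPmat(s)$ has $i$th entry $\lhpone(s,r_i)$ and depends only on the source--receiver distances, and every dependence on the inter-receiver geometry sits in the off-diagonal entries $s\lhpone(s,R_{ij})$. So the whole argument reduces to three steps: (i) these off-diagonal entries vanish as the FARs move far apart; (ii) the limit can be pushed through $\Amat^{-1}(\cdot)$; and (iii) it can then be pushed through $\mathcal{L}^{-1}$.

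For step (i) I would use the explicit form $s\lhpone(s,R_{ij})=\frac{a}{R_{ij}}\exp\!\big(-(R_{ij}-a)\sqrt{s/D}\big)$, which shows that in the stated limit (equivalently, $\lhpone(s,R_{ij})\to 0$ for all $i\neq j$) each off-diagonal entry of $\Amat(s)$ tends to $0$ for every fixed $s$ in the region of convergence while the diagonal stays at $1$; hence $\Amat(s)\to I_N$ entrywise. For step (ii), since $\det\Amat(s)\to 1\neq 0$ and matrix inversion is continuous wherever the determinant is nonzero, $\Amat^{-1}(s)\to I_N$, so $\Amat^{-1}(s)\bPmat(s)\to\bPmat(s)$ componentwise.

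Step (iii) finishes the proof: $\hp{i}(t)=\mathcal{L}^{-1}[(\Amat^{-1}(s)\bPmat(s))_i]\to\mathcal{L}^{-1}[\lhpone(s,r_i)]=\hpone(t,r_i)=\frac{a}{r_i}\erfc\!\big(\frac{r_i-a}{\sqrt{4Dt}}\big)$, which is exactly \eqref{eq1rx}; by Lemma~\ref{cdiff} this says $\hpdiff{i}(t)=\hpone(t,r_i)-\hp{i}(t)\to 0$, so the mutual influence of the FARs perishes. To make the interchange rigorous I would write $\Amat(s)=I_N-E(s)$, where $E(s)$ has zero diagonal and off-diagonal entries $-s\lhpone(s,R_{ij})$; since every entry of $E(s)$ tends to $0$, one has $\|\Amat^{-1}(s)-I_N\|\leq\|E(s)\|/(1-\|E(s)\|)\to 0$, so $(\Amat^{-1}(s)\bPmat(s))_i-\lhpone(s,r_i)$ is controlled by an explicit expression in the $\lhpone(s,R_{ij})$; bounding this uniformly along the Bromwich contour and invoking continuity of the inverse Laplace transform then yields the time-domain limit. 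Alternatively, one can read the result off the explicit series expansions used for Theorem~\ref{t3rx} and its $N$-FAR analogue, where every correction term to $\hpone(t,r_i)$ visibly carries a prefactor of the form $a/R_{ij}$.

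The main obstacle is precisely this last exchange of the limit with $\mathcal{L}^{-1}$: pointwise convergence of the Laplace-domain functions does not by itself transfer to their inverse transforms, so one needs a dominating bound (or uniform decay in $s$) along the inversion contour. Because of the exponential factors $\exp(-(R_{ij}-a)\sqrt{s/D})$ in $\lhpone(s,R_{ij})$, such a bound is straightforward to produce, so I expect no genuine difficulty once the setup is in place.
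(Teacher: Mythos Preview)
Your approach is essentially the same as the paper's: show that $s\lhpone(s,R_{ij})\to 0$ as $R_{ij}\to\infty$, conclude $\Amat(s)\to I_N$, and then read off $\Pmat(t)=\mathcal{L}^{-1}(\bPmat(s))$ from Theorem~\ref{tnfar}. The paper's proof is two lines and does not pause to justify the limit exchanges you worry about in step~(iii), so your version is, if anything, more careful than what appears in the paper.
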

\begin{proof}
	When $R_{ij}\rightarrow \infty $, $ s\lhpone(s,R_{ij})\rightarrow 0 $. Therefore, $ \Amat(s)=\mathrm{I} $.\\
	Now,  from \eqref{eqnxr}, $ \Pmat(t)= \mathcal{L}^{-1}(\bPmat(s))$.
\end{proof}	
\subsection{Accuracy of the Hitting Probability Equations}\label{accsec}

\begin{figure}
	\centering
	\includegraphics[width=0.45\linewidth]{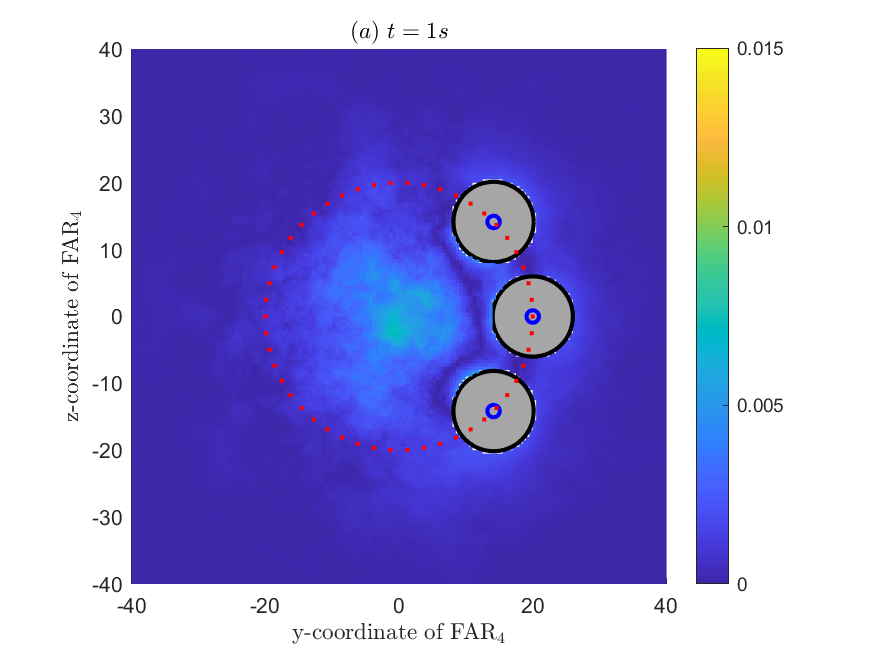}
	\includegraphics[width=0.45\linewidth]{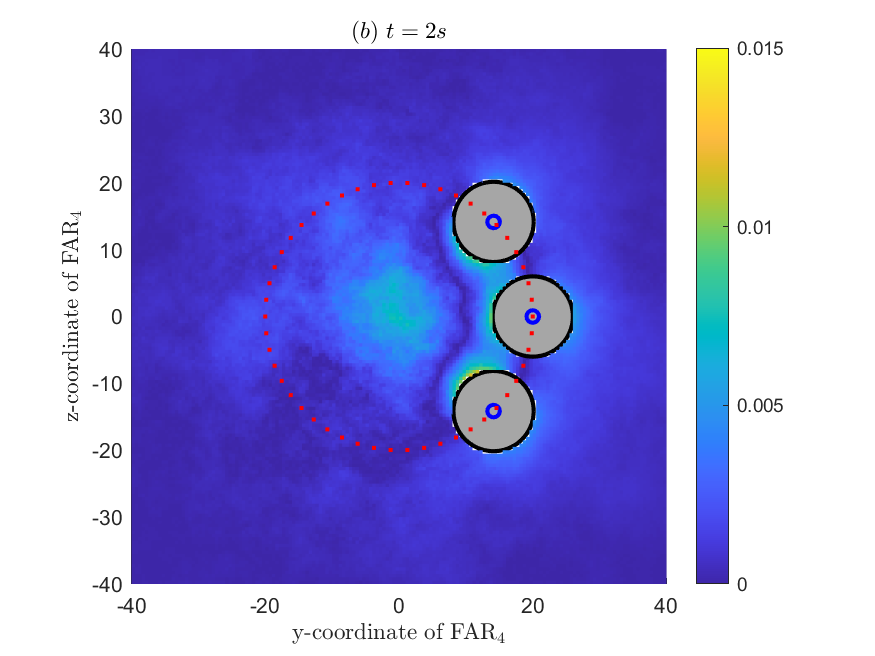}
	\caption{Variation of the absolute error of the hitting probability of FAR$ _4 $ in a $ 4- $FAR system. Parameters: $ a=3\mu m,\ D=100\mu m^2/s,\ t=1s,\ \Delta t=10^{-4}s,\ \x_1=[10,\ 20,\ 0],\ \x_2=[10,\	14.14,\	14.14],\  \x_3=[10,\	14.14,\	-14.14],\  \x_4=[10,\ y,\	z], \text{where } y,z$ is varied in $ y,z $ axis.}
	\label{fig:aberr}
\end{figure}
The derived analytical equations are validated using particle-based simulation with the help of MATLAB. The numerical inverse Laplace transform function from \cite{mcclure2021} is used to evaluate the analytical results. In the particle-based simulation, an IM is generated in each iteration, and the Brownian motion of the IM is tracked. The location of the IM in $ \mathbb{R}^3 $ is tracked and checked to see whether it hits any of the FAR. If it hits any FAR, it is removed from the environment. After many iterations, the average hitting events at each FAR are taken to get the hitting probability.  The step size $ \Delta t $ taken for the simulation is $ 10^{-4}\ s $ .

Fig. \ref{fig:aberr} shows the variation of the absolute error ($= \mid $analytical value $ - $ simulated value$ \mid $) of hitting probability at FAR$ _4 $ for a $4-$ FAR system. Here, the transmitter is at the origin, and the centers of the other FARs (blue circles) are located in a circle (dotted circle). The gray circles represent the area where FAR$_4$ cannot be located (as the FARs will overlap). It can be verified that the absolute error is minimal in most regions, even in those where FARs are close to each other. 

To investigate the approximation error in \eqref{eqnxr}, we have performed extensive simulations for various system configurations. These results are summarized in \cite{web}.

Similar to the 2-FAR system \cite{sabu2020a}, the accuracy of the derived equation can degrade if the FARs are too close to each other and the target FAR hinders the other FARs from the line of sight of the transmitter. In such scenarios, the approximating hitting point to a single point $ \y_i $ is not a good approximation because the IM may hit the FARs at a different random point.

As explained, the only source of approximation in \eqref{eqnxr} is due to the approximation of $ \z_j $ by $ \y_j $. Since $ \z_j $ and $ \y_j $ are close for small receivers ($ a\rightarrow 0 $), we can expect the approximation error to be minimal. The following lemma presents the bounds on the absolute approximate error to provide a further theoretical guarantee.
\begin{lemma}\label{eboundc}
	The approximation error in the approximate hitting probability of $ i $th FAR {\color{black}in the Laplace domain}, as given by \eqref{eqnxr}, is bounded as
	\begin{align}
		\left| E_i(s)\right| 
		\leq  \sum_{k=1}^N\sum_{j=1,\ j\neq k}^{N}	\left| c_{ik}(s)\right|  G_{kj}(s) \ s\lhpone(s,r_j),\label{abslem}
	\end{align}
	where $ c_{ik}(s) $ is the $ i,k $th element of $ \Amat^{-1}(s) $ and 
	\begin{align}
		G_{kj}(s)
		&= 
		\lhpone(s,R_{jk})
		\max\left\{\left[\frac{R_{jk}}{\|\x_j-\x_k\|-a}e^{2a\sqrt{\frac{s}{D}}}-1\right],	
		\left[{e^{2a\sqrt{\frac{s}{D}}}}{}-\frac{R_{jk}}{\|\x_j-\x_k\|+a}\right]\right\}\label{eq:errGijm}.
	\end{align}
	\begin{proof}
		See Appendix \ref{ebound}.
	\end{proof}
\end{lemma}
From \eqref{eq:errGijm}, we can verify that, as $ a\rightarrow 0 $, $ G_{kj}(s) \rightarrow 0$ and $ 	|{E_i}(s) | \rightarrow {0} $ for all $i$. Note that $ G_{kj}(s) $ goes to zero relative to $\lhpone(s,R_{jk})$. 
Therefore, this results show that the error is small relative to hitting probability when the radius of the FARs is small.
\begin{coro} \label{asympcoro}
	The asymptotic approximation error in the approximate hitting probability of $ i $th FAR is bounded as
	\begin{align}
		\left| E_i(\infty)\right|\leq \sum_{k=1}^N\sum_{j=1,\ j\neq k}^{N}	\left| c_{ik}(0)\right|\frac{a^2}{r_jR_{jk}} \max\left\{\left[\frac{R_{jk}}{\|\x_j-\x_k\|-a}-1\right],	
		\left[1-\frac{R_{jk}}{\|\x_j-\x_k\|+a}\right]\right\}. \label{asymp}
	\end{align}
	\begin{proof}
		Applying the final value theorem for the Laplace transform (\ie $ \lim_{t\rightarrow\infty} f(t)=\lim_{s\rightarrow0}sF(s)$) \cite{spiegel1992} in \eqref{abslem} gives \eqref{asymp}.
	\end{proof}
\end{coro}

\begin{figure}[h]
	\centering
	\includegraphics[width=0.6\linewidth]{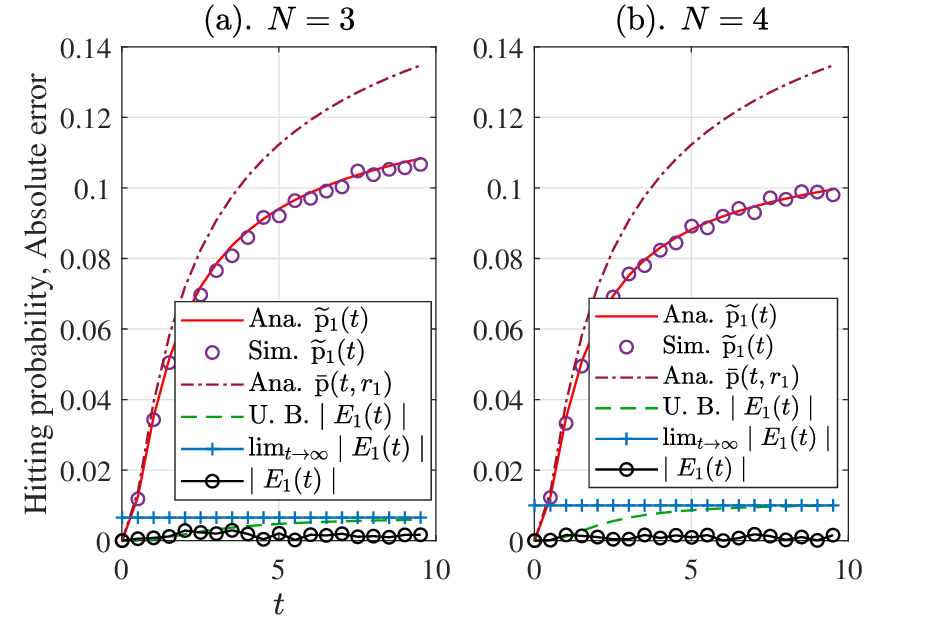}
	\caption{{\color{black}Variation of hitting probability and absolute approximation error with time for an MC system with $ N $ FARs. Parameters : $\x_i= [0,\ 20\cos(2\pi i/N),\ 20\sin(2\pi i/N)],\ N\in\{3,4\},\ D=100\mu m^2/s $.}}
	\label{fig:abserr2m1}
\end{figure}
{\color{black} Fig. \ref{fig:abserr2m1} shows the variation of hitting probability and absolute error for $ N=3 $ and $ N=4 $ FARs for an MC system with UCA of FARs. We can verify that the absolute error is minimal and the upper bound of the absolute error derived in Lemma \ref{eboundc} is valid. The accuracy of derived expressions is theoretically guaranteed by the fact that the upper bound is small in comparison to the actual hitting probability. The dash-dotted lines in the Fig. \ref{fig:abserr2m1} correspond to the hitting probability of IMs in the absence of other FARs (\ie \eqref{eq1rx}). The growing deviation between the dash-dotted lines  and solid lines with time $ t $ accounts for the increasing mutual influence of FARs with time $ t $.}

\subsection{Recursive Approach}\label{subsecrec}
We now present one more approach to derive the hitting probability. 
It is based on recursive method that gives the same expression for the hitting probability; however, it can be convenient when expressions for the sub-systems are known. 
In this method, the Laplace transform of the hitting probability for an $ N-$FAR system is expressed in terms of the Laplace transform of the hitting probability for an $ (N-1) $- FAR system. Taking its inverse Laplace transform gives the corresponding hitting probability in the time domain, as shown in Theorem \ref{threc}.
\begin{theorem}\label{threc}
	In an MC system with $N- $FARs, the  probability that an IM emitted from the point source at the origin hits  FAR$ _1 $  within time $ t $ is given as 
	\begin{align}
	\ehp{}\left(t,\x_1\mid \{\x_j\}_{j=2}^N\right)\approx\hp{}\left(t,\x_1\mid \{\x_j\}_{j=2}^N\right)=\mathcal{L}^{-1}\left[\lhp{}\left(s,\x_1\mid \{\x_j\}_{j=2}^N\right)\right],\label{receq}
	\end{align}
where the Laplace transform $\lhp{}\left(s,\x_1\mid \{\x_j\}_{j=2}^N\right)$ of the hitting probability (fraction of IMs absorbed) in $N-$FAR system can be recursively written in terms of  the Laplace transform  of the hitting probability in $(N-1)$- FAR subsystems $\{x_j\}_{j=1}^{N-1}$ and $\{x_j\}_{j=2}^{N}$ as 
	\begin{align}
		&\lhp{}\left(s,\x_1{\mid} \{\x_j\}_{j=2}^N\right)=\frac{\lhp{}\left(s,\x_1\mid \{\x_j\}_{j=2}^{N-1}\right){-} s\lhp{}\left(s,\x_N{\mid} \{\x_j\}_{j=2}^{N-1}\right)\lhp{}\left(s,\x_1-\y_N{\mid} {\{\x_j\}_{j=2}^{N-1}-\y_N}\right)}{
		1{-}s^2\lhp{}\left(s,\x_N{-}\y_1{\mid} {\{\x_j\}_{j=2}^{N-1}{-}\y_1}\right)\lhp{}\left(s,\x_1{-}\y_N{\mid} {\{\x_j\}_{j=2}^{N-1}{-}\y_N}\right)}.\label{rexnresub}
	\end{align}
Here, $ \lhp{}\left(s,\x_N{-}\y_1{\mid} \{\x_j \}_{j=2}^{N-1}{-}\y_1\right)$ represents the Laplace transform of the  probability that an IM with its starting point  $ \y_1 $ at FAR$ _1 $ hits FAR$ _N $ when the medium consists of $N-1$ FARs at $\x_2,\x_3,\cdots,\x_N$. 
\end{theorem}
\begin{proof}
	See Appendix \ref{arec}.
\end{proof}
To demonstrate the applicability of the above two methods: direct method via Theorem \ref{tnfar} and recursive method via Theorem \ref{threc}, we present the hitting probability for a symmetric system with $ 4 $ FARs in the following corollary. This scenario is equivalent to placing the center of FARs at the vertex of a tetrahedron, and the transmitter is placed at the centroid of the tetrahedron.
\begin{coro}\label{cnrx1}
		In an MC system with $N=4$ FARs that are equidistant from the transmitter at the origin and also equidistant from each other (\ie $ r_i=r \text { and } R_{ij}=R,\forall\ i,j\in{1,2,3,4},\ i\neq j$), the hitting probability (fraction of IMs absorbed) of the IM on each of the FAR is given by
			\begin{align}
			\hp{1}(t)=\frac{a}{r}\sum_{n=0}^{\infty}\frac{(-3a)^n}{R^n}\erfc\left(\frac{r-a+n(R-a)}{\sqrt{4Dt}}\right).\label{eqnrxeq}
		\end{align}
	\begin{proof}
		See Appendix \ref{anrx1} for proof via both approaches.
	\end{proof}
\end{coro}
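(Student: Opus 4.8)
The plan is to specialize Theorem~\ref{tnfar} to $N=4$ under the assumed symmetry, solve the resulting $4\times4$ linear system in closed form in the Laplace domain, and then invert the answer term by term.

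First I would set $\rho(s)=s\,\lhpone(s,R)=\frac{a}{R}\exp\!\left(-(R-a)\sqrt{s/D}\right)$ and observe that when $r_i=r$ and $R_{ij}=R$ the coefficient matrix of Theorem~\ref{tnfar} becomes $\Amat(s)=(1-\rho(s))\,\mathrm{I}+\rho(s)\,\mathbf{1}\mathbf{1}^{\mathsf{T}}$, a rank-one perturbation of a scaled identity (all diagonal entries $1$, all off-diagonal entries $\rho(s)$), while $\bPmat(s)=\lhpone(s,r)\,\mathbf{1}$ with $\mathbf{1}=[1,1,1,1]^{\mathsf{T}}$. Since $\Amat(s)$ is invertible and the system $\Amat(s)[\lhp{1}(s),\dots,\lhp{4}(s)]^{\mathsf{T}}=\bPmat(s)$ is unchanged under any permutation of the four indices (every $R_{ij}$ and every $r_i$ is the same), its unique solution is permutation-invariant, so $\lhp{1}(s)=\lhp{2}(s)=\lhp{3}(s)=\lhp{4}(s)$. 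Substituting the constant ansatz and using $\Amat(s)\mathbf{1}=(1+3\rho(s))\mathbf{1}$ gives $\lhp{1}(s)=\dfrac{\lhpone(s,r)}{1+3\rho(s)}$ (this is also immediate from the Sherman--Morrison formula).

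Next I would expand $\dfrac{1}{1+3\rho(s)}=\sum_{n=0}^{\infty}\bigl(-3\rho(s)\bigr)^{n}$, which is valid on a Bromwich contour with $\mathrm{Re}(s)$ large enough since $\rho(s)\to 0$ there. Plugging in $\lhpone(s,r)=\frac{a}{sr}\exp\!\left(-(r-a)\sqrt{s/D}\right)$ and $s\,\lhpone(s,R)=\frac{a}{R}\exp\!\left(-(R-a)\sqrt{s/D}\right)$ yields
\begin{align*}
\lhp{1}(s)=\sum_{n=0}^{\infty}\frac{(-3a)^{n}}{R^{n}}\cdot\frac{a}{sr}\exp\!\left(-\bigl(r-a+n(R-a)\bigr)\sqrt{s/D}\right).
\end{align*}
Each summand is, up to the constant $(-3a)^n/R^n$, exactly the transform behind \eqref{eq1rx} with the argument $r$ replaced by $r-a+n(R-a)$, i.e.\ $\mathcal{L}^{-1}\!\left[\frac{1}{s}e^{-c\sqrt{s/D}}\right]=\erfc\!\left(c/\sqrt{4Dt}\right)$ for $c\ge 0$. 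Inverting term by term then gives $\hp{1}(t)=\frac{a}{r}\sum_{n=0}^{\infty}\frac{(-3a)^{n}}{R^{n}}\erfc\!\left(\frac{r-a+n(R-a)}{\sqrt{4Dt}}\right)$, which is \eqref{eqnrxeq}. (The same computation for general $N$ replaces $1+3\rho$ by $1+(N-1)\rho$ and reproduces Corollary~\ref{c3rxeq} when $N=3$.)

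The main obstacle is justifying the termwise inversion, i.e.\ the interchange of the sum with $\mathcal{L}^{-1}$. Since $\y_i$ lies on the (non-overlapping) surface of FAR$_i$, it is outside FAR$_j$, so $R=R_{ij}>a$ and in particular $R-a>0$; hence the erfc argument grows linearly in $n$ and $\erfc\!\left(\frac{r-a+n(R-a)}{\sqrt{4Dt}}\right)$ decays super-geometrically in $n$ for every fixed $t>0$, which makes the time-domain series absolutely convergent (and uniformly so on compact $t$-intervals) regardless of the size of $3a/R$; combined with the validity of the geometric expansion on a right half-plane contour, this licenses the interchange. A secondary, minor point to spell out is the permutation-symmetry reduction of the $4\times4$ system to a single scalar equation, which rests only on the uniqueness of the solution of the invertible system $\Amat(s)[\lhp{1}(s),\dots,\lhp{4}(s)]^{\mathsf{T}}=\bPmat(s)$.
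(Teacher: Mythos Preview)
Your argument is correct, but it follows a different route from the paper's. The paper proves Corollary~\ref{cnrx1} via the \emph{recursive} formula of Theorem~\ref{threc}: under the symmetry $r_i=r$, $R_{ij}=R$, the various $(N{-}1)$-FAR quantities in \eqref{rexnresub} coincide, the recursion collapses to $\lhp{}(s,\x_1\mid\{\x_j\}_{j=2}^4)=\lhp{}(s,\x_1\mid\{\x_j\}_{j=2}^3)/\bigl(1+s\lhp{}(s,\x_1-\y_4\mid\{\x_j\}_{j=2}^3-\y_4)\bigr)$, and plugging in the $N=3$ result \eqref{l3rx} yields $\lhpone(s,r)/(1+3s\lhpone(s,R))$. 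You instead work directly with Theorem~\ref{tnfar}, recognizing $\Amat(s)=(1-\rho)\mathrm{I}+\rho\,\mathbf{1}\mathbf{1}^{\mathsf T}$ and using the permutation symmetry (equivalently, that $\mathbf{1}$ is an eigenvector with eigenvalue $1+3\rho$) to read off $\lhp{1}(s)=\lhpone(s,r)/(1+3\rho)$ in one line. From that point on the two proofs are identical (geometric expansion, then termwise inversion via $\mathcal{L}^{-1}[s^{-1}e^{-c\sqrt{s/D}}]=\erfc(c/\sqrt{4Dt})$). Your route is the more economical of the two and, as you note, generalizes instantly to arbitrary $N$ by replacing $3$ with $N-1$; the paper's route has the complementary advantage of illustrating how the recursive Theorem~\ref{threc} is meant to be used when the $(N{-}1)$-FAR sub-system is already in hand. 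Your remarks on the convergence of the time-domain series (using $R>a$) go beyond what the paper spells out and are a welcome addition.
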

\begin{figure}[h]
	\centering
	\includegraphics[width=0.5\linewidth]{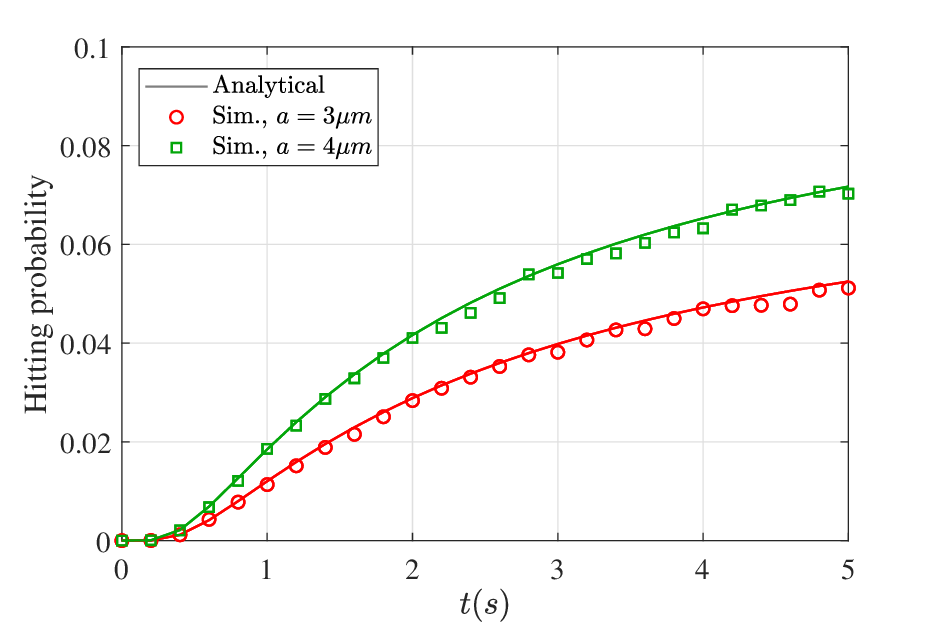}
	\caption{\color{black}Variation of the hitting probability until time $ t $ versus $ t $. Here, FARs are equidistant from the transmitter and equidistant from each other. Parameters: $ D=100\mu m^2/s $,  $ \x_1=[20,\ 20,\ 20],\ \x_2=[20,\ -20,\ -20],\ \x_3=[-20,\ -20,\ 20],\ \x_3=[-20,\ 20,\ -20] $.}
	\label{fig:coro24m1}
\end{figure}
   {\color{black} Fig. \ref{fig:coro24m1} validates the Corollary \ref{cnrx1} by comparing it with simulation.}  
    
  To demonstrate the importance and applicability of the derived channel models, we now describe a  system with a receiver consisting of multiple FARs arranged in a UCA fashion \cite{gursoy2019,tang2021,ahuja2021,gursoy2019a}.  We first derive the hitting probability expressions for this SIMO system in the next section. Using these  derived  expressions, we  analyze the performance of this system   in terms of the probability of bit error in Section \ref{sec:comm_UCA}. 

    \section{Hitting Probability in a SIMO System with UCA of FARs}\label{sec5uca}
        \begin{figure}[ht!]
    	\centering
    	\includegraphics[width=\figsize]{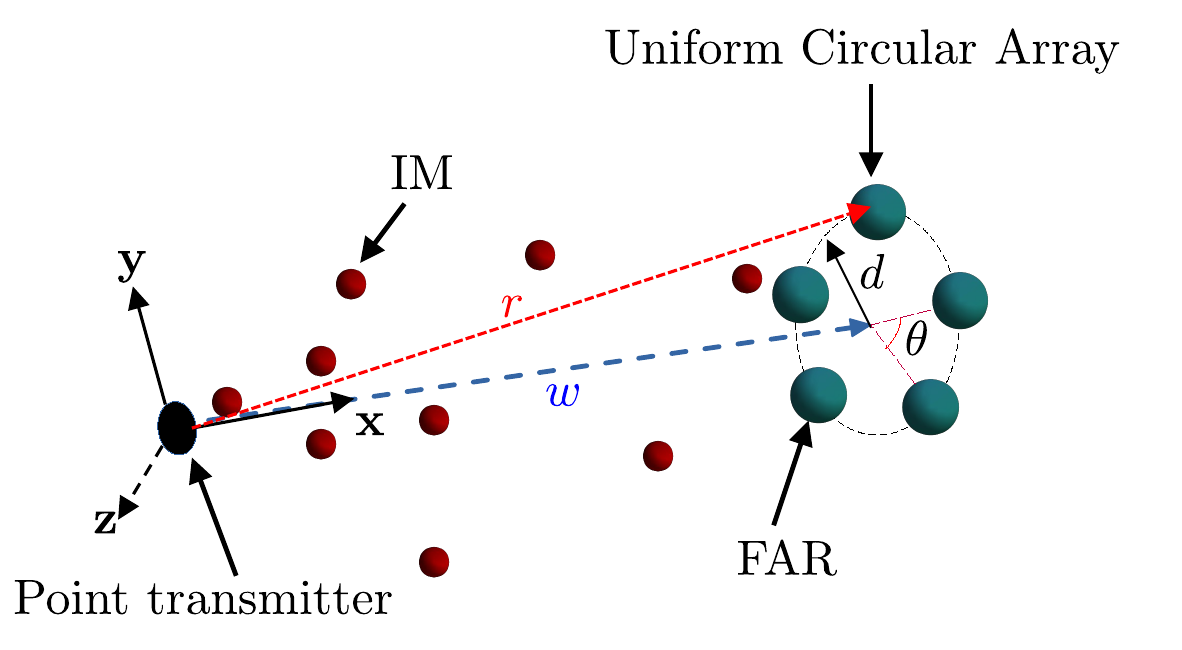}
    	\caption{An illustration showing a SIMO system with a point transmitter and a receiver consisting of $N $ FARs arranged as a UCA.}
    	\label{fig:uca}
    \end{figure}
    
    We now study a SIMO system with a point transmitter at the origin and multiple FARs arranged as UCA,  as shown in Fig. \ref{fig:uca}. The UCA center is located at $ x -$ axis at $ [w,0,0] $ and UCA is in $ y-z $ plane. The location of FAR$ _i $ is at $\x_i= [w,\ d\cos(2\pi n/N),\ d\sin(2\pi n/N)],\ n\in\{1,2,\cdots, N\} $.
The distance between the transmitter and each FAR is $r=\sqrt{w^2+d^2}$.
Without loss of generality, denote any one of the FARs in the UCA by FAR$_1 $ and the rest of the FARs by FAR$ _2 $ to FAR$ _N $ in the clockwise direction. 
From the perspective of the  distance from FAR$ _1 $ to the rest of the FARs, 
there are $\delta= \text{ceil}((N-1)/2) $ types of neighbors of FAR$_1$.  Let $ R_1 $ is the first neighbor distance, which is equal to $ R_{21} $ and $ R_{N1} $, $ R_2 $ is the second neighbor distance, which is equal to $ R_{31} $ and $ R_{(N-1)1} $ and so on.

The following theorem presents the hitting probability of an IM in a SIMO system with $ N $ FARs forming a UCA  and a single transmitter. 
\begin{theorem}\label{exUCA}
	In a SIMO system with $N$ FARs arranged as UCA, the  probability that an IM emitted from the transmitter hits FAR$ _1 $  is given as
	\begin{align}
		\ehp{1}(t)&=\invlaplace{}{
			\frac{\lhpone(s,r)}{1+\sum_{j=2}^{N}s\mathbb{E}_{\z_j}\left[\lhpone(s,\|\z_j-\x_i\|)\right]} }{}\label{exlapb1}.
	\end{align}
\begin{proof}
    For UCA, $ r_i=r $ and $ \ehp{i}(t)=\ehp{j}(t),\ \forall i, j $. Substituting these values in \eqref{exsnxr},  simplifying further and applying the inverse Laplace transform, we get \eqref{exlapb1}.
\end{proof}
\end{theorem}

    \begin{theorem}\label{NUCA}
    	In a SIMO system with $N$ FARs arranged as UCA, the probability that an IM emitted by the point source hits FAR$ _1 $  is approximately equal to\\
    	For odd N:
    	\begin{align}
    		\hp{1}(t)&=\sum_{n=0}^{\infty}(-2)^n\frac{a^{n+1}}{r}\sum_{\mid \mathbf{k}\mid=n} {n \choose \mathbf{k}}\frac{1}{\mathbf{R^k}}
    		\erfc\left(\frac{r-a+\sum_{m=1}^{\delta}k_m(R_m-a)}{\sqrt{4Dt}}\right),\label{NUCAO}
    	\end{align}
    	For even N:
    	\begin{align}	
    		\hp{1}(t)&=\sum_{n=0}^{\infty}(-1)^n\frac{a^{n+1}}{r}\sum_{\mid \mathbf{k}\mid=n} {n \choose \mathbf{k}}\frac{2^{n-k_\delta}}{\mathbf{R^k}}
    		\erfc\left(\frac{r-a+\sum_{m=1}^{\delta}k_m(R_m-a)}{\sqrt{4Dt}}\right).\label{NUCAE}
    	\end{align}
    	Here, $ \mathbf{k}=(k_1, k_2, \ldots, k_\delta) $, $ |\mathbf{k}|=k_1+k_2+\cdots +k_\delta $, ${n \choose \mathbf{k}}={n \choose k_1, k_2, \ldots, k_\delta}  $
    	and $ \mathbf{R^k}=\prod_{i=1}^\delta (R_i)^{k_i}$ based on multi-index notation \cite{bricogne2010}. Also, the second sum is taken over all combinations of non-negative integer indices $ k_1 $ through $ k_\delta $ such that $ \sum_{i=1}^\delta k_i=n $. 
    	\begin{proof}
    		See Appendix \ref{ConjPr}.
    	\end{proof}
    \end{theorem}
The following corollary gives the upper and lower bounds of the hitting probability.
	\begin{coro}\label{boundcc}
			The lower and upper bound of the exact hitting probability of each FAR in a UCA with $N$ FARs is given as  \eqref{NUCAO} and \eqref{NUCAE} with $ R_m$ for all  $m$ replaced with $ \underline{R}_m=\|\x_m-\x_1\|-a $ and $ \overline{R}_m=\|\x_m-\x_1\|+a $ respectively.
	\end{coro}
	\begin{proof}
		Using the inequalities  $ \mathbb{E}_{\z_j}\left[\lhpone(s,\|\z_j-\x_i\|) \right]\leq \lhpone(s,\|\x_j-\x_i\|-a)  $ and $ \mathbb{E}_{\z_j}\left[\lhpone(s,\|\z_j-\x_i\|) \right] \geq  \lhpone(s,\|\x_j-\x_i\|+a)  $ in \eqref{exlapb1} and solving in a  similar fashion as the proof of Theorem \ref{NUCA} gives the desired bounds respectively.
\end{proof}

  The hitting probability presented in Theorem \ref{NUCA} can be simplified for specific values of $ N $. For example, for $ N=2 $ to $ 5 $, the hitting probability expressions are given in the following corollary.
\begin{coro}\label{tuca}
    	In a SIMO system with $N$ receiver FARs arranged as a UCA, the  probability that an IM emitted from the point source at the origin hits a FAR is given as
    	\begin{align}
    		\hp{1}(t)&=\frac{a}{r}\sum_{n=0}^{\infty}\frac{(-a)^n}{R_1^n}\erfc\left(\frac{r-a+n(R_1-a)}{\sqrt{4Dt}}\right),\label{UCA2}&&\text{if } N=2\\
    		\hp{1}(t)&=\frac{a}{r}\sum_{n=0}^{\infty}\frac{(-2a)^n}{R_1^n}\erfc\left(\frac{r-a+n(R_1-a)}{\sqrt{4Dt}}\right),\label{UCA3}&&\text{if } N=3\\
    		\hp{1}(t)&=\sum_{n=0}^{\infty}(-1)^n\frac{a^{n+1}}{r}\sum_{k=0}^{n}{n \choose k}\frac{2^k}{R_1^kR_2^{n-k}}&&\nonumber\\ &\times\erfc\left(\frac{r-a+k(R_1{-}a)+(n-k)(R_2{-}a)}{\sqrt{4Dt}}\right),&&\text{if } N=4\label{UCA4}\\
    		\text{and }
    		\hp{1}(t)&=\sum_{n=0}^{\infty}(-2)^n\frac{a^{n+1}}{r}\sum_{k=0}^{n}{n \choose k}\frac{1}{R_1^kR_2^{n-k}}\nonumber&&\\ &\times\erfc\left(\frac{r-a+k(R_1{-}a)+(n-k)(R_2{-}a)}{\sqrt{4Dt}}\right),&&\text{if } N=5\label{UCA5}
    	\end{align}
    	respectively.
    \end{coro}

\subsection{Numerical Evaluation}\label{nsec}
\begin{figure}
	\centering
	\includegraphics[width=0.55\linewidth]{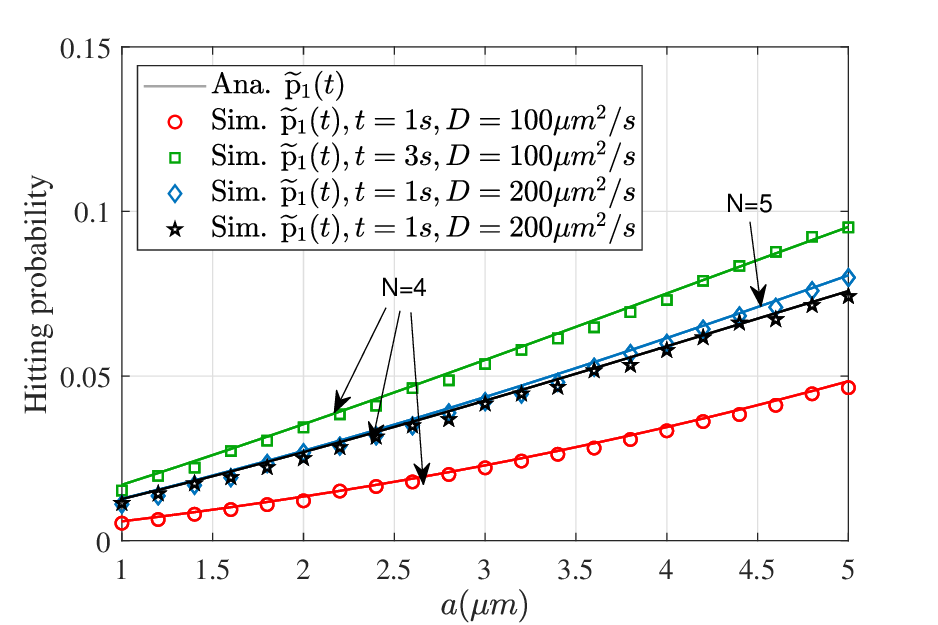}
	\caption{Variation of the hitting probability $\hp{1}(t)$ with the radius $a$ of the FAR for different values of $D$ and $ t $. Here, $d=20 \mu m, w=10\mu m,\ r=22.36\mu m,\
		\Delta t=10^{-4}\ s $.}
	\label{fig:ucaf3}
\end{figure}

The variation of the hitting probability with the radius $a$ of the FARs for $ N=4,\, 5 $, $ t=1\text{ and } 3s $ and $ D=100,\ 200\mu m^2/s $ is shown in Fig. \ref{fig:ucaf3}. We can verify that the approximate analytical results given in \eqref{UCA4} and \eqref{UCA5} are in good match with the particle-based simulation results. {\color{black}The number of terms to be considered for the hitting probability expression depends on the parameters of the system and the required level of accuracy. In this paper, different configurations and parameters of MC systems are considered for numerical analysis. Therefore, we chose the following alternative method for simulation. Let $ \epsilon $ be the difference between the summation of the hitting probability expression with $ n-1 $ terms and the $ n $th term. For simulation, we found that $\epsilon=10^{-10}  $ can provide the required  accuracy, which corresponds to $ n=6 $ to $ 31 $ for $ N=2 $ to $ 11 $.}
We can observe that the hitting probability increases with the radius of the FARs. This is because an increase in $a$ causes an increase in the surface area of the FARs, which allows more IMs to hit the FARs. The widening gap between the hitting probabilities for $ N=4 $ and $ N=5 $ with $ a $ is due to the increase in the mutual influence by adding one extra FAR and the reduction in the mutual distance between adjacent FARs. 

\begin{figure}[h]
	\centering
	\includegraphics[width=0.55\linewidth]{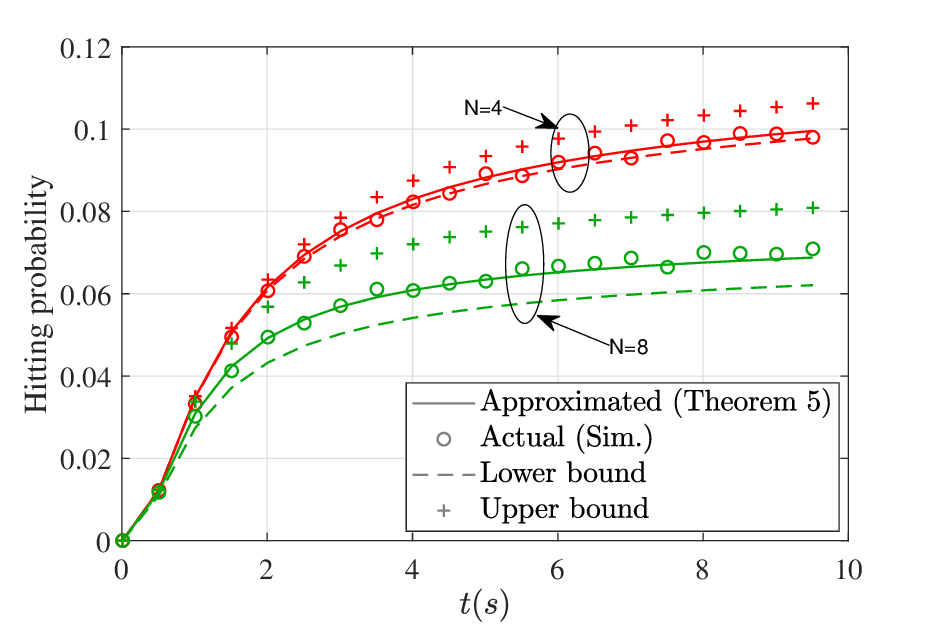}
	\caption{\color{black}Variation of the actual, approximate, lower and upper bounds of hitting probability with time for a UCA of $ N $ FARs. Parameters: $ a = 4\mu m,\ D = 100\mu m^2/s,\ d = 20\mu m,\ w = 10\mu m,\ t = 10^{-4}s $.}
	\label{fig:ucaboundm}
\end{figure}
{\color{black}From Fig  \ref{fig:ucaboundm}, we can verify that the derived analytical results (represented by bold solid lines) and particle-based simulation results (represented by markers) are upper-bound and lower-bound as given in Corollary \ref{boundcc}.}

  \begin{figure}[ht!]
	\centering
	\includegraphics[width=0.3\linewidth]{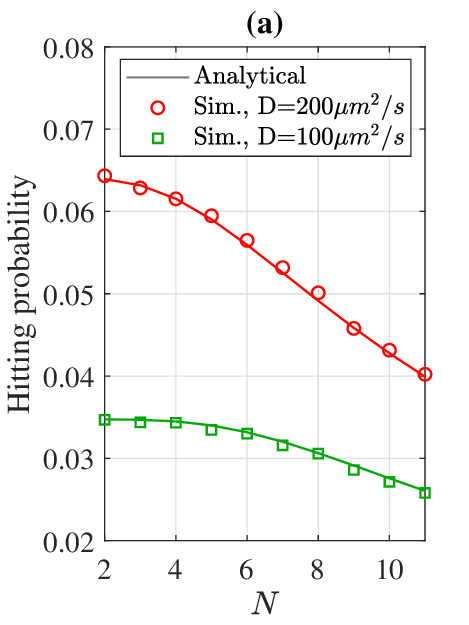}
	\includegraphics[width=0.3\linewidth]{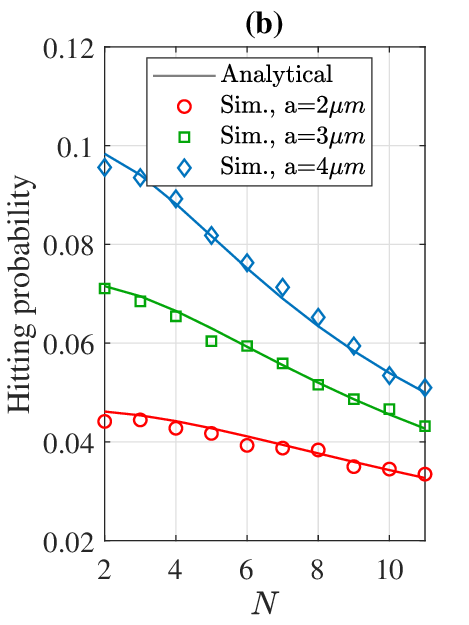}
	\caption{Variation of the hitting probability $\hp{1}(t)$ with $ N $ for different values of (a) $ D $ at $ t=1s $ and (b) $ a $ at $ t=5s $ for a SIMO system with $N$ FAR UCA. Here, $d=20 \mu m, w=10\mu m,\ r=22.36\mu m,\ a=4\mu m,\
		\Delta t=10^{-4}\ s$.}
	\label{fig:nuca}
\end{figure}

 Fig. \ref{fig:nuca} shows the variation of the hitting probability with the number of FARs. We can observe that the hitting probability drops with $N$ due to the increase in the mutual influence of FARs.
When $D$ increases, the hitting probability increases due to the faster motion of the molecules. An increase in  $ D $ causes molecules to wander more, causing them to bump to other FARs with a higher chance. This can increase the mutual influence of FARs. The hitting probability also decreases faster with an increase in $ N $ for higher values of $a$. This is due to the increase in the mutual influence of FARs with higher surface areas.

\subsection{Signal Gain in the Received IMs due to Multiple FARs}\label{siggain}
Now, we study the gain in the received signal due to the use of $ N $ FARs. When only one FAR is present, the received signal is equal to the FAR's hitting probability multiplied with the number of emitted molecules $ M $ \ie $ M\hpone_i(t) $.  
Assuming a symmetric  combining of the individual received signals of each FAR, the total received signal when $ N $ FAR is used is given as $ \sum_{i=1}^NM\hp{i}(t) =MN\hp{i}(t) $. Hence, the {\em signal gain} $ g(t) $ is given as
\begin{align}
	g(t) =\frac{\text{Average number of IMs absorbed by an $ N $ FAR system}}{\text{Average number of IMs absorbed by a single FAR system}} =\frac{N\hp{i}(t)}{\hpone(t,r)}
	,\label{eqgain}
\end{align}
which means that $ g(t) $ is the same as the ratio of the summation of hitting probabilities of IMs on each of the $ N $ FARs to the hitting probability for a single FAR case. {\color{black}Note that the transmitter to FAR distance is fixed as $ r $ for a fair comparison of multiple FAR case and a single FAR case.}

\begin{coro}\label{thdg2rx}
	The signal gain $ g(t) $ in a SIMO system with UCA of $ 2 $- FAR and $ 3 $- FAR  is respectively given as
	\begin{align}
		g(t)&=\frac{N\hp{1}(t)}{\hpone(t,r)}=
		\begin{cases}
		\displaystyle 2-\frac{2\lambda(t,r,1)}{\hpone(t,r)},& \text{if } N=2
	\\
	\displaystyle 3-\frac{3\lambda(t,r,2)}{\hpone(t,r)},& \text{if } N=3
	\end{cases}\\
	\label{dg3}
	\text{where }	\lambda(t,r,b)&=\frac{a}{r}\sum_{n=1}^{\infty}\frac{(-1)^{n+1}(ba)^n}{R^n}\erfc\left(\frac{r-a+n(R-a)}{\sqrt{4Dt}}\right).
	\end{align}
	\begin{proof}
		See Appendix \ref{adg2rx}.
	\end{proof}
\end{coro}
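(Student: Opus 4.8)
The plan is to start from the closed-form symmetric hitting-probability expressions already established and simply assemble the array gain from its definition. For $N=2$ the relevant expression is \eqref{UCA2} (equivalently the $b=1$ specialization of \eqref{eq3rxeq}-type series), and for $N=3$ it is Corollary~\ref{c3rxeq}, i.e. \eqref{eq3rxeq} itself. In both symmetric cases every FAR has the same hitting probability $\hp{1}(t)$, so $\sum_{i=1}^N \hp{i}(t)=N\hp{1}(t)$ and \eqref{eqgain} collapses to $g(t)=N\hp{1}(t)/\hpone(t,r)$. Thus the whole statement reduces to rewriting $N\hp{1}(t)/\hpone(t,r)$ in the claimed form.

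The key step is to peel off the $n=0$ term of the series for $\hp{1}(t)$. Writing, for the $N$-FAR symmetric case, $\hp{1}(t)=\frac{a}{r}\sum_{n=0}^{\infty}\frac{(-(N-1)a)^n}{R^n}\erfc\!\left(\frac{r-a+n(R-a)}{\sqrt{4Dt}}\right)$, the $n=0$ term is exactly $\frac{a}{r}\erfc\!\left(\frac{r-a}{\sqrt{4Dt}}\right)=\hpone(t,r)$. Hence
\begin{align*}
\hp{1}(t)=\hpone(t,r)+\frac{a}{r}\sum_{n=1}^{\infty}\frac{(-(N-1)a)^n}{R^n}\erfc\!\left(\frac{r-a+n(R-a)}{\sqrt{4Dt}}\right),
\end{align*}
so $\frac{N\hp{1}(t)}{\hpone(t,r)}=N+\frac{N}{\hpone(t,r)}\cdot\frac{a}{r}\sum_{n=1}^{\infty}\frac{(-(N-1)a)^n}{R^n}\erfc(\cdots)$. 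Comparing with the definition of $\lambda(t,r,b)$, the tail sum equals $-\lambda(t,r,N-1)$ because the sign $(-(N-1)a)^n=(-1)^n((N-1)a)^n=-(-1)^{n+1}((N-1)a)^n$. Specializing $N=2$ gives $b=N-1=1$ and $g(t)=2-2\lambda(t,r,1)/\hpone(t,r)$; specializing $N=3$ gives $b=2$ and $g(t)=3-3\lambda(t,r,2)/\hpone(t,r)$, which is precisely \eqref{dg3}.

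I expect the only genuine obstacle to be bookkeeping rather than mathematics: one must confirm that \eqref{UCA2} for $N=2$ really coincides with the ``equidistant'' series at $b=1$ (it does, since for two FARs $R_{12}=R_{21}=R$ and $2a\to a$ matches $(N-1)a=a$), and similarly that \eqref{UCA3}/\eqref{eq3rxeq} is the $b=2$ instance. A secondary point worth a sentence in the write-up is convergence: the series defining $\lambda(t,r,b)$ converges provided $(N-1)a<R$ (and for all $t$ the $\erfc$ factors are bounded by $1$), which holds for non-overlapping FARs in the regimes of interest; this justifies splitting off the $n=0$ term term-by-term. After that the corollary follows by direct substitution, so no further estimates are needed.
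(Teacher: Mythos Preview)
Your proposal is correct and follows essentially the same route as the paper's proof in Appendix~\ref{adg2rx}: peel off the $n=0$ term of the symmetric series \eqref{eq2rxhp1}/\eqref{eq3rxeq} to recover $\hpone(t,r)$, identify the remaining tail with $-\lambda(t,r,N-1)$, and substitute into $g(t)=N\hp{1}(t)/\hpone(t,r)$. The only cosmetic difference is that the paper re-derives \eqref{eq2rxhp1} from the Laplace-domain expression \eqref{l2rx} within the appendix, whereas you simply cite the already-established \eqref{UCA2} and \eqref{eq3rxeq}; your added convergence remark is a nice bonus the paper omits.
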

Since $ \hp{i}(t)<\hpone(t,r) $ due to FAR's mutual influence, \eqref{eqgain} shows that the signal gain is less than $ N $, and it approaches $ N $ if mutual influence is small (for example, when $ R_{ij}\rightarrow\infty $). In particular, from Corollary \ref{thdg2rx}, we can verify that the gain $ g(t)\leq 2 $ and $ g(t)\leq 3 $ for $ N=2 $ and $ N=3 $ respectively. 
	Further, when all the FARs are far apart (\ie $ R_{ij}\rightarrow\infty,\forall \ i,j, i\neq j$), then $\lambda(t,r_1,b) \rightarrow 0 $. This implies that $ g(t)\rightarrow 2 $ and  $ g(t)\rightarrow 2 $ for $N=2$ and $N=3$ respectively.  
Since the total number of received molecules determines the bit error rate of the MC channel, the signal gain acts as a proxy metric for the performance of the MC system. A high value of signal gain indicates reliable communication. Therefore, the signal gain can be utilized to decide suitable FAR placement patterns and mutual distances while designing the system.
	
The following corollary presents the asymptotic signal gain  for UCA with $ N $ FARs.

\begin{coro}\label{cornrx}
The asymptotic signal gain for a SIMO system with $N$ FAR UCA 	at $ t\rightarrow\infty $ is  
	\begin{align}
		g_\infty= \lim_{t\rightarrow \infty}g(t)=\begin{cases}
			\displaystyle\frac{N}{1+\sum_{i=1}^{\delta}2a/R_i}& \text{for odd N} \\ 
			\displaystyle \frac{N}{1+\sum_{i=1}^{\delta-1}2a/R_i+a/R_\delta}& \text{for even N}.
		\end{cases}
	\end{align}
\end{coro}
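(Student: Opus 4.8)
The plan is to combine the cyclic symmetry of the UCA with the final value theorem for the Laplace transform. First, because the $N$ FARs occupy symmetric positions on the array (the rotation by $2\pi/N$ about the axis through the transmitter and the array centre permutes them cyclically while fixing the transmitter), we have $\hp{i}(t)=\hp{1}(t)$ for all $i$, so the definition \eqref{eqgain} reduces to $g(t)=N\hp{1}(t)/\hpone(t,r)$, exactly as in Corollary \ref{thdg2rx}. The denominator is disposed of directly from \eqref{eq1rx}: since $r>a$, the argument of the complementary error function vanishes as $t\to\infty$, so $\hpone(t,r)=\tfrac{a}{r}\erfc\!\big(\tfrac{r-a}{\sqrt{4Dt}}\big)\to\tfrac{a}{r}$. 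Hence it only remains to compute $\lim_{t\to\infty}\hp{1}(t)$ and divide by $a/r$.

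For that limit I would apply the final value theorem $\lim_{t\to\infty}f(t)=\lim_{s\to0^{+}}sF(s)$ \cite{spiegel1992} to the Laplace-domain expressions \eqref{lNrxodd} and \eqref{lNrxeven} of Conjecture \ref{NUCA}. The only computational input needed is $s\lhpone(s,x)=\tfrac{a}{x}\exp\!\big(-(x-a)\sqrt{s/D}\big)\to\tfrac{a}{x}$ as $s\to0^{+}$, valid for any fixed $x>a$. Writing the common denominator of \eqref{lNrxodd} / \eqref{lNrxeven} as $\rho(s)$, this gives numerator $s\lhpone(s,r)\to a/r$ and $\rho(s)\to\rho(0^{+})$, where $\rho(0^{+})$ is obtained from the displayed denominator by the substitution $s\lhpone(s,R_i)\mapsto a/R_i$ in each term. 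Therefore $\lim_{t\to\infty}\hp{1}(t)=(a/r)/\rho(0^{+})$, and $g_\infty=\lim_{t\to\infty}g(t)=N\cdot\dfrac{(a/r)/\rho(0^{+})}{a/r}=\dfrac{N}{\rho(0^{+})}$, which is the claimed formula in both the odd and even cases; specialising to $N=2,3,4,5$ recovers \eqref{asymgain} as a consistency check.

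The only delicate point — and the main obstacle — is verifying the hypothesis of the final value theorem, namely that $\lim_{t\to\infty}\hp{1}(t)$ exists and is finite, for only then does $\lim_{s\to0^{+}}s\lhp{1}(s)$ compute it. This is immediate from probabilistic monotonicity: $\hp{1}(t)$ is non-decreasing in $t$ (the event that the tagged FAR has been hit by time $t$ grows with $t$) and is bounded above by $\hpone(t,r)\le a/r<1$ (cf. Lemma \ref{cdiff}), hence converges. Two remarks round things out. First, one need not invoke Conjecture \ref{NUCA}: by the cyclic symmetry the all-ones vector is an eigenvector of the matrix $\Amat(s)$ of Theorem \ref{tnfar} with eigenvalue equal to its common row sum $\rho(s)=1+\sum_{j\neq1}s\lhpone(s,R_{j1})$, so $\lhp{1}(s)=\lhpone(s,r)/\rho(s)$ follows unconditionally from $\Pmat(t)=\invlaplace{}{\Amat^{-1}(s)\bPmat(s)}$, and the same final-value computation applies. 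Second, grouping the $N-1$ neighbours into the $\delta$ distinct neighbour distances — the antipodal one appearing once when $N$ is even and all others appearing twice — is precisely what turns $\sum_{j\neq1}a/R_{j1}$ into the displayed sums over $i=1,\dots,\delta$.
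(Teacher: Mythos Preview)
Your proposal is correct and follows essentially the same approach as the paper: the paper's proof consists of the single sentence ``Applying the final value theorem for the Laplace transform in \eqref{lNrxodd} and \eqref{lNrxeven} gives the desired result,'' which is precisely your main line of argument, and you have simply filled in the details (the limit $s\lhpone(s,x)\to a/x$, the symmetry reduction $g(t)=N\hp{1}(t)/\hpone(t,r)$, and the verification of the final value theorem's hypothesis via monotonicity and boundedness of $\hp{1}(t)$). Your additional remark that the eigenvector structure of $\Amat(s)$ in Theorem~\ref{tnfar} yields $\lhp{1}(s)=\lhpone(s,r)/\rho(s)$ directly---thereby making the corollary independent of Conjecture~\ref{NUCA}---is a genuine strengthening not present in the paper.
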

\begin{proof}
	Application of the final value theorem for the Laplace transform (\ie $ \lim_{t\rightarrow\infty} f(t)=\lim_{s\rightarrow0}sF(s)$) \cite{spiegel1992} in  \eqref{lapucan}  gives the desired result. 
\end{proof}
The above results show that the maximum gain that can be achieved eventually using UCA of $ N $ FARs is always less than $ N $, unless FARs are located far away from each other (\ie $ R_i\rightarrow \infty $). 
\begin{figure}[ht!]
	\centering
	\begin{subfigure}[b]{0.49\textwidth}
		\centering
	\includegraphics[width=0.9\textwidth]{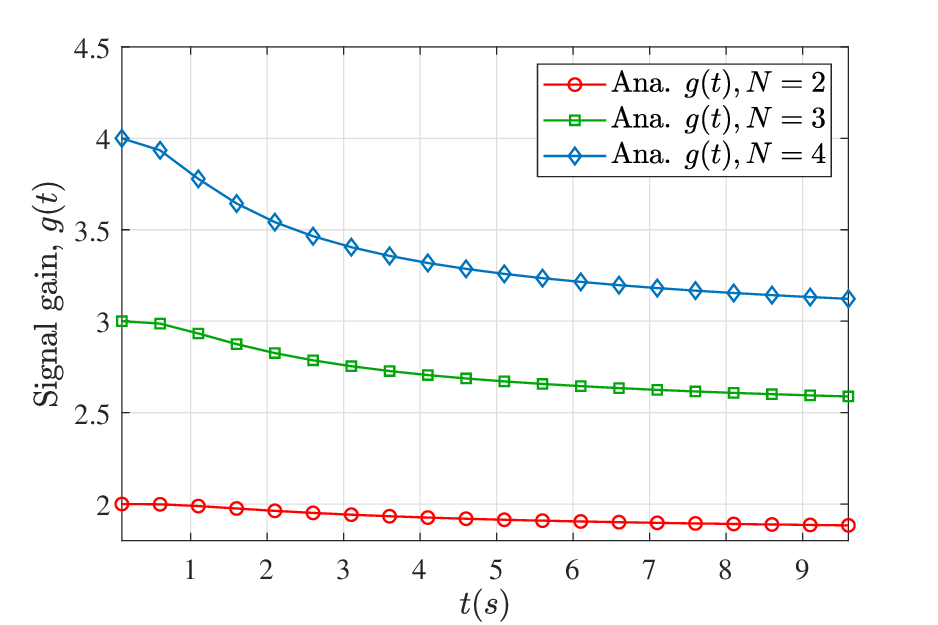}
	\caption{}
\end{subfigure}
\begin{subfigure}[b]{0.49\textwidth}
	\centering
	\includegraphics[width=0.9\textwidth]{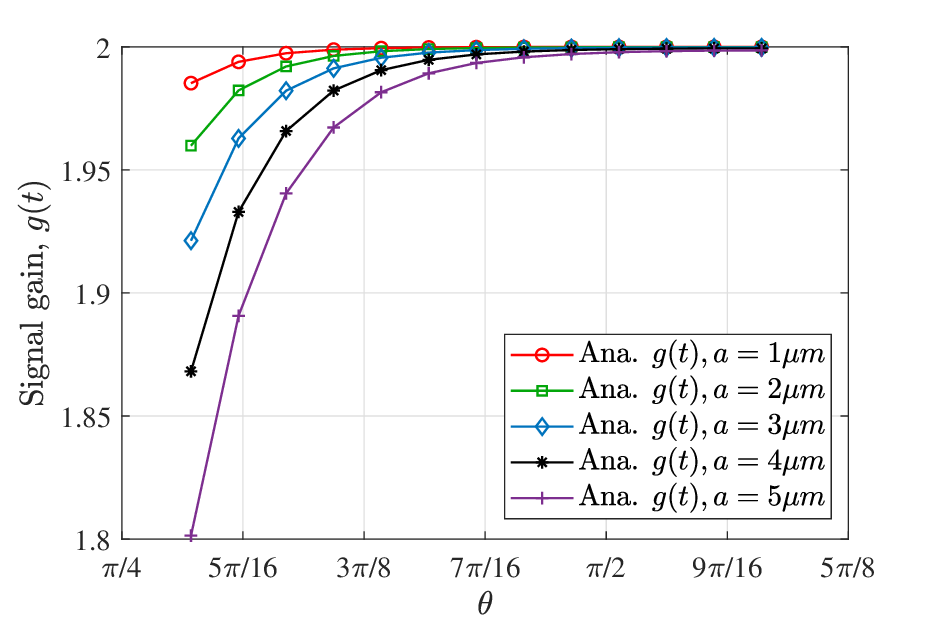}
	\caption{}
\end{subfigure}
	\caption{(a)Variation of signal gain with (a) time for different values of $ N $ in a SIMO system with $N$ FARs UCA. 
		Here, $d=20\mu m, r=20 \mu m, w=0\mu m,\ a=4 \mu m,\ D=100\mu m^2/s
		$, (b)  angular distance between two FARs for different values of $ a $ for FARs arranged in a circular array ($N=2$). Here, the two FARs are equidistant from the origin and, $\ D=100\mu m^2/s,\ t=1 s,\ \x_1=[d,0,0],\ \x_2=[d\cos(\theta),d\sin(\theta),0] ,\ d=20\mu m$.}
	\label{fig:dgt}
	
\end{figure}

Fig. \ref{fig:dgt} (a) shows the variation of $ g(t) $ with $ t $ for different values of $ N $ in a system for  UCA of $ N $ FARs.
When $ t $ increases, the $ g(t) $ reduces from $ g(t)\approx N $ to a lower value. The reduction in the $ g(t) $ with $ t $ is due to the increase in the mutual influence between FARs with time as seen in Fig. \ref{fig:abserr2m1}. 
 When $ t\rightarrow \infty $, from Corollary \ref{cornrx}, 
$g_\infty$ is given by $ g_\infty=\frac{2}{1+(a=4)/(R_1=36)}=1.8,\ g_\infty=\frac{3}{1+2(a=4)/(R_1=31.24)}=2.39 $ and $ g_\infty=\frac{4}{1+2(a=4)/(R_1=25.61)+(a=4)/(R_2=36)} =2.81$, respectively for $ N=2,3 $ and $ 4 $.


Fig. \ref{fig:dgt} (b) shows the variation of the signal gain with the angle between two FARs for different values of $ a $ in a circular array arranged $ 2-$FAR system.
The increase in $ g(t) $ with the increase in the angle between the two FARs is due to the decrease in mutual influence between the FARs. Also, we can see that an increase in $ a $ increases the $ g(t) $. The result further shows that with an increase in $N$, the angular distance between the FARs decreases, causing the mutual influence to decrease the signal gain significantly.

   Now, we use the derived hitting probability expression for the UCA arrangement of FARs to analyze the performance of the SIMO system with different cooperative detection schemes in the next section.

    \section{Communication Performance in a SIMO System with UCA of $N$ Co-operative FARs}\label{sec:comm_UCA}
    
Let us consider a SIMO communication system consisting of a point transmitter and a receiver consisting of a UCA of $N$ FARs as described in the previous section.
Let the duration of each time slot be $ \tb $. The point transmitter impulsively emits molecules at the beginning of the time slot corresponding to the transmit information bit  $ b[k] $ at that time slot. 
$ b[k] $ is Bernoulli distributed with parameter $ q $. We consider on-off keying (OOK) based modulation, in which the transmitter emits $ u[k]=M $ number of molecules for bit $ b[k]=1 $ and no molecules for bit $ b[k]=0 $.  The fraction of IMs transmitted at the time slot $ k $ that reach FAR$ _i$ in the slot $ l $ is given by
 \begin{align}
   	h_{i}[l-k]=&\hp{i}((l{-}k{+}1)\tb)-\hp{i}((l{-}k)\tb).\label{simoccoeff}
   \end{align}  
Let $ z_{i}[k;l] $ denote the number of IMs received at FAR$ _i $ at the $ l $th time slot due to 
the $ k $th time slot transmission. The indicator that an IM emitted at the $ k $th slot hits the FAR$ _i $ in the $ l $th slot is Bernoulli random variable with parameter $ h_{i}[l-k]  $. Therefore, $z_{i}[k;l] $ is Binomial distributed with parameters $ b[k]u[k] $ and $ h_{i}[l-k] $, \ie $ z_{i}[k;l]\sim\mathcal{B}( b[k]u[k],h_{i}[l-k] ) $. For mathematical tractability,  
the distribution of $ z_{i}[k;l] $ can be approximated as the  Poisson distribution having mean $ b[k]u[k]h_{i}[l-k] $ (\ie $  z_{i}[k;l]\sim\mathcal{P}\left( b[k]u[k]h_{i}[l-k]\right)$) \cite{jamali2019}. The total number of molecules received at the FAR$_i $ in the $ l $th time slot (\ie $ z_{i}[l] $) is given as 
   \begin{align}
   	z_{i}[l]=z_{i}[l;l]+\sum_{k=1}^{l-1} z_{i}[k;l].\label{simorxmolwith}
   \end{align}
   where $z_{i}[l;l]  $) denotes the number of signal molecules and $ \sum_{k=1}^{l-1} z_{i}[k;l]$ denotes the inter-symbol interference (ISI) molecules.
From \eqref{simorxmolwith}, conditioned on the transmit bit $b[l]$, the mean number of IMs received at FAR$ _i $ 
is given by
\begin{align}
	\lambda_{0i}[l]&=\expectA{z_{i}[l]|b[l]=0}=qM\sum_{k=1}^{l-1}h_i[l-k]
	=qM\left(\hp{i}(l\tb)-\hp{i}(\tb)\right)\label{emu0}\\
	\lambda_{1i}[l]&=\expectA{z_{i}[l]|b[l]=1}=M\hp{i}(\tb)+\lambda_{0i}[l]\label{emu1}.
	\end{align}%
\subsection{Cooperative Detection in SIMO Systems}
In MC, the FARs can cooperate by sharing their received signals $ z_{i}[l] $  to jointly decode. Depending on the information-sharing mechanism between the FARs, the following combining schemes can be utilized by the FARs.
	\subsubsection{Soft Combining}
If FARs can share complete information about their received signals, they can soft-combine these signals. Since the system is symmetric, they apply the same weight to each individual signal. Hence, the resultant received signal is the sum of individual received signals, which is equal to the total number of molecules received at any FAR given as 
		\begin{align}
			z[l]=\sum_{i=1}^{N} z_{i}[l].\label{simorxmolsum}
		\end{align}
		From \eqref{simorxmolsum}, conditioned on the transmit bit $b[l]$, the mean total number of IMs received is given by
		\begin{align}
			\lambda_{0}[l]&=\expectA{z[l]|b[l]=0}=qMN\left(\hp{1}(l\tb)-\hp{1}(\tb)\right)\label{emu00}\\
			\lambda_{1}[l]&=\expectA{z[l]|b[l]=1}=MN\hp{1}(\tb)+\lambda_{0}[l]\label{emu11}.
	\end{align}
	\subsubsection{Hard Combining}
	Alternatively, the FARs can individually make a \textit{local decision} based on their individual received signal $ z_i[l] $. These hard decisions can be combined to perform the final decoding. In particular, they can employ the rule to decode the transmitted bit as $ 1 $, if at least $ K $ of the $ N $ FAR’s local decision is $ 1 $, otherwise $ 0 $ \cite{yutingfang2017,varshney2018a}. Some special cases of $ K $-out-of-$ N $ fusion rule are
	\begin{itemize}
		\item \textit{Majority rule}: The transmitted bit is decoded as $ 1 $ if majority of the FARs decode it as $ 1 $, \ie  $ K=\floor{N/2}+1 $. 
		\item \textit{OR rule}: The transmitted bit is decoded as $ 1 $ if at-least one FAR decodes it as $ 1 $, \ie $ K=1 $.
		\item \textit{AND rule}: The transmitted bit is decoded as $ 1 $ if all FARs decode it as $ 1 $, \ie $ K=N $.
	\end{itemize}
Since less information is required to be shared among FARs compared to soft-combining, hard combining is more practical. However, due to the same reasons, its performance is lower than that of the soft-combining.

We consider perfect reporting of information from the FARs to the fusion center (FC) in both schemes. A fully transparent FC with perfect reporting of the absorbed number of IMs or the local decisions can be challenging to implement in real scenarios. Hence, the derived bit error probability expression can be used as the upper bound of any practical implementation.
\subsection{The Probability of Bit Error}
\subsubsection{Soft Combining}
 The combined signal $ z[l] $ is compared with the threshold for detection $ \eta[l]$.  The decision rule for the detection process can be represented as
\begin{align}
	z[l] \underset{\mathcal{H}_0[l]}{\overset{\mathcal{H}_1[l]}{\gtrless}}\eta[l],\label{hypsoft}
\end{align} 
where $ \mathcal{H}_{0}[l] $ and $ \mathcal{H}_{1}[l] $ are the null and alternate hypothesis at the FC corresponding to the bit value $ 0 $ and $ 1 $ respectively.
The overall probability of miss detection $ \mathrm{P}_{\m} [l]$ and probability of false alarm $ \mathrm{P}_{\f} [l]$ are given as
\begin{align}
	\mathrm{P}_{\m} [l]&=\mathrm{P}\left(z[l]<\eta[ l] \mid \mathcal{H}_{1}[l]\right)=\exp\left(-\lambda_{1 }[l]\right)\sum_{n=0}^{\eta[l]}\frac{\lambda_{1 }[l]^n}{n!},\label{spm}\\
	\text{and  }
	\mathrm{P}_{\f } [l]&=\mathrm{P}\left(z[l]\geq \eta[ l] \mid \mathcal{H}_{0}[l]\right)
	=1-\exp\left(-\lambda_{0}[l]\right)\sum_{n=0}^{\eta[l]}\frac{\lambda_{0}[l]^n}{n!}.\label{spf}
\end{align} 
Therefore, the probability of bit error is given as
\begin{align}
	\mathrm{P_e}[l]=q\mathrm{P_m} [l]+(1-q)\mathrm{P_f} [l].
\end{align}
\subsubsection{Hard Combining}
Since each FAR has to make its own hard decision, it compares the number of IMs received at itself with a threshold $ \eta[i,l] $. Hence, the local decision rule can be written as
	\begin{align}
		z_i[l] \underset{\mathcal{H}_{0i}[l]}{\overset{\mathcal{H}_{1i}[l]}{\gtrless}}\eta[i;l],\label{edetrule}
	\end{align} 
where $ \mathcal{H}_{0i}[l] $ and $ \mathcal{H}_{1i}[l] $ are null and alternate hypothesis of the $ i $th FAR.
The local probability of miss detection $ \mathrm{P}_{\m i} [l]$ and probability of false alarm $ \mathrm{P}_{\f i} [l]$
of FAR$_i  $ are given as
\begin{align}
	\mathrm{P}_{\m i} [l]&=\mathrm{P}\left(z_{i}[l]<\eta[i ; l] \mid \mathcal{H}_{1i}[l]\right)=\exp\left(-\lambda_{1 i}[l]\right)\sum_{n=0}^{\eta[i;l]}\frac{\lambda_{1 i}[l]^n}{n!},\label{lpm}\\
\text{and  }
	\mathrm{P}_{\f i} [l]&=\mathrm{P}\left(z_{i}[l]\geq \eta[i ; l] \mid \mathcal{H}_{0i}[l]\right)
	=1-\exp\left(-\lambda_{0 i}[l]\right)\sum_{n=0}^{\eta[i;l]}\frac{\lambda_{0 i}[l]^n}{n!}.\label{lpf}
\end{align}

Note that, due to the symmetric arrangement of FARs in the UCA, the distribution of the received number of molecules at each FAR is identically distributed. 
Therefore, the detection threshold (and hence, the local miss detection and false alarm probability) are equal across the FARs.
Hence, at the $ l $th time slot, the global miss detection and false alarm probabilities for the $ K $-out-of-$ N $ fusion rule are given by \cite{varshney1997,yutingfang2017}
\begin{align}
	\mathrm{Q}_{\m} [l]&=1-\sum_{k=K}^{N}\binom{N}{k}\left(1-\mathrm{P}_{\m} [l]\right)^k\mathrm{P}_{\m} [l]^{N-k},\label{gpm}\\
\text{and } 
	\mathrm{Q}_{\f} [l]&=\sum_{k=K}^{N}\binom{N}{k}\mathrm{P}_{\f} [l]^{k}\left(1-\mathrm{P}_{\f} [l]\right)^{N-k},\label{gpf}
\end{align}
respectively.
Therefore, the the probability of bit error is given as 
\begin{align}
	\mathrm{P_e}[l]=q\mathrm{Q_m} [l]+(1-q)\mathrm{Q_f} [l].
\end{align}
For OR and AND rule, the probability of bit error is respectively given as
\begin{align}
	\mathrm{P_e}[l]&=q\mathrm{P}_\m[l]^N +(1-q)\left(1-(1-\mathrm{P}_\f[l])^N\right),\\
	\text{and }
	\mathrm{P_e}[l]&=q\left(1-(1-\mathrm{P}_\m[l])^N\right) +(1-q)\mathrm{P}_\f[l]^N.
\end{align}

\subsection{Numerical Results 
}\label{copnum}
We now present numerical results for the performance of the SIMO system. Here, we have taken FAR radius $ a=4 \mu m$, diffusion coefficient $  D=100\mu m/s^2 $,  number of molecules $ M=200 $, symbol time $ \tb=5 s$, UCA center distance $ w=25 \mu m $ and UCA radius $ d=10\mu m $ unless stated otherwise. FARs are located at $ \x_i=[w,d\cos\left(\frac{2\pi (i-1)}{N}\right),d\sin\left(\frac{2\pi (i-1)}{N}\right)] $.
Here, the value of  $ l $ is fixed as $ 9  $ (\ie current time-slot is the $ 9 $th slot) because the average ISI molecules were observed to be almost  contained in $ 8 $ slots for the considered values of the parameters  $ a,\ D,\ M,\ \tb,\ r,\ w  $ and $ d $. Therefore, the BER will not change if we increase the value of $ l $ above $  9 $.
\begin{figure}[ht!]
	\centering
	\includegraphics[width=\figsize]{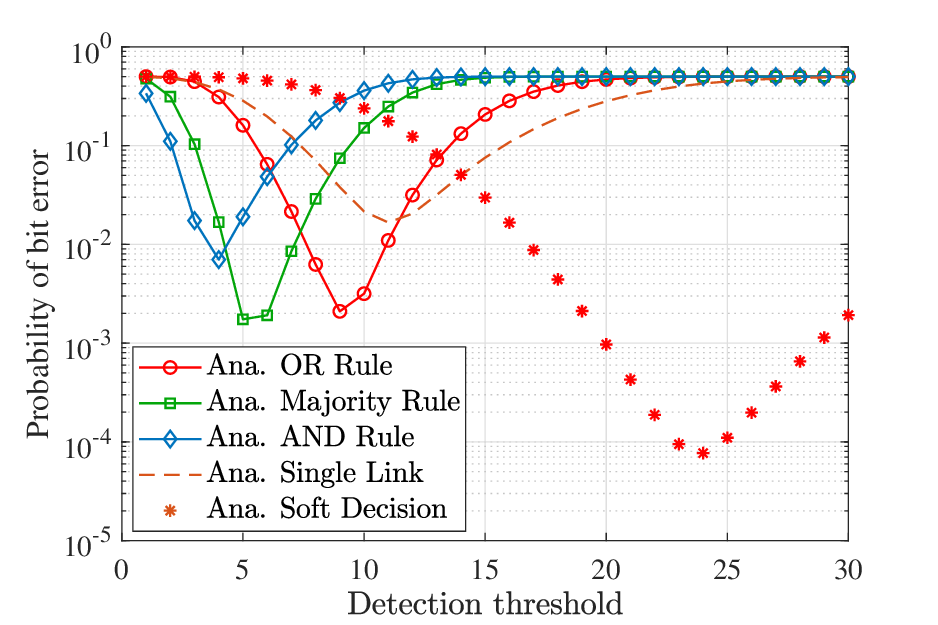}
	\caption{Variation of the probability of bit error in a SIMO MC system with UCA of FARs with the detection threshold. Here, $  N=4, \text{ and } q=0.5$.}
	\label{fig:pvsetauca}
\end{figure}

Fig. \ref{fig:pvsetauca} shows the variation of the probability of bit error  with the detection threshold for various combining methods. 
We can observe that, $\mathrm{P_e}[l]$ first reduces with the threshold. After reaching the minimum value, it increases, confirming the presence of an optimal detection threshold. For OR, AND, and majority rules, the optimal threshold ($ \eta[i;l] $) is near to  $ \lambda_{1i}[l] $, $ \lambda_{0i}[l] $ and in the middle of $ \lambda_{1i}[l] $ and $ \lambda_{0i}[l] $ respectively. The optimal threshold is different for different combining rules and has to be selected accordingly. The rest of the results use the optimal threshold value, which is computed numerically by varying the detection threshold for the specific set of parameter-values and combining rules to find the value for which $\mathrm{P_e}[l]$ is the minimum.

\begin{figure}[ht]
	\centering
		\begin{subfigure}[b]{0.49\textwidth}
		\centering
		\includegraphics[width=0.9\linewidth]{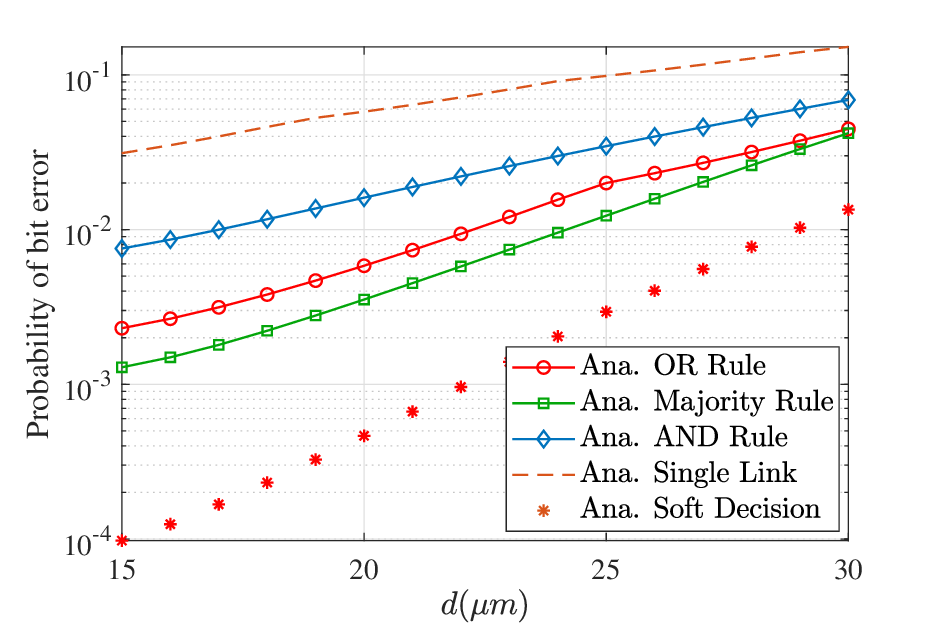}
		\caption{}
	\end{subfigure}
	\begin{subfigure}[b]{0.49\textwidth}
	\centering
	\includegraphics[width=0.9\linewidth]{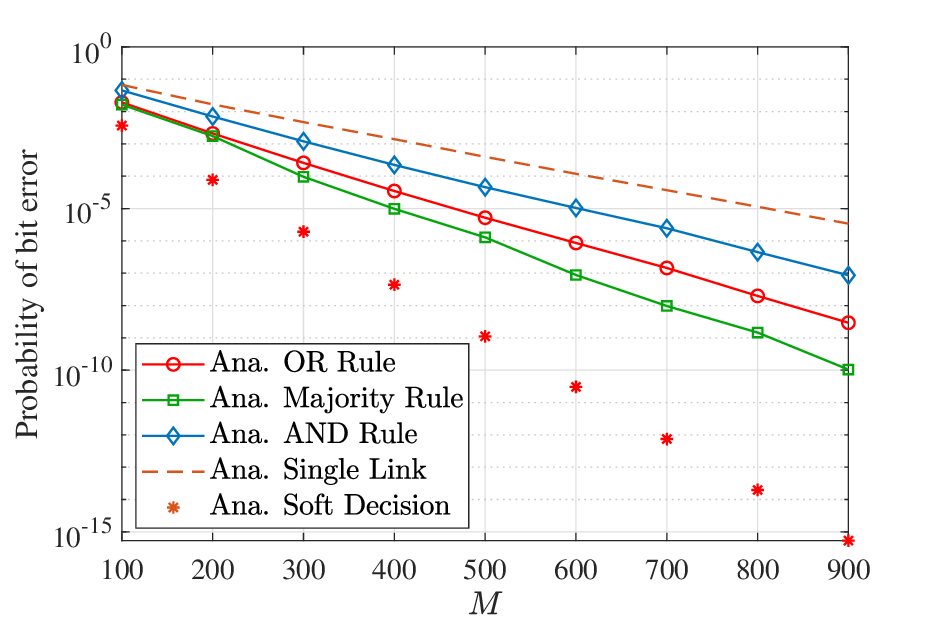}
	\caption{}
\end{subfigure}
	\caption{Variation of the probability of bit error with (a) the radius of UCA for a SIMO MC system with UCA, (b) the number of molecules emitted for bit $ 1 $. Parameters; $  N=4,\ q=0.5, \text{ and } \eta[l]=\text{optimal value}$.}
	\label{fig:pvsd}
\end{figure}

Fig. \ref{fig:pvsd} (a) shows the variation of $\mathrm{P_e}[l]$ with respect to  $ d $. An increase in $ d $ increases $ r $, thus reducing the hitting probability.  $\mathrm{P_e}[l]$ increases with $ d $.
Fig. \ref{fig:pvsd} (b) shows the variation of $\mathrm{P_e}[l]$ with respect to $ M $, which can be observed to decrease with $ M $.
 The effect of different decision rules becomes dominant when $ M $ is increased.
\begin{figure}[ht]
	\centering
	\includegraphics[width=\figsize]{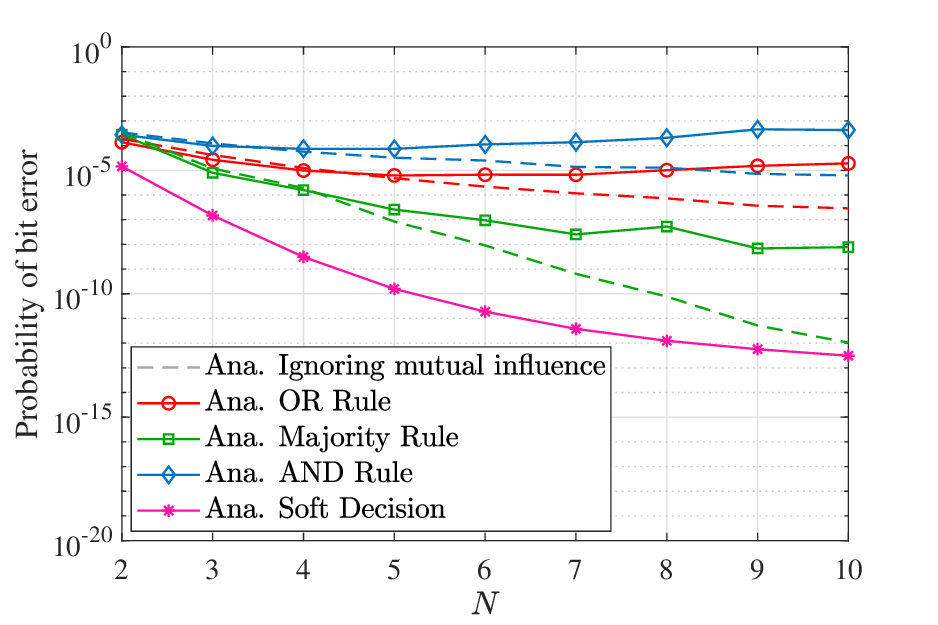}
	\caption{Variation of the probability of bit error with the number of FARs in a SIMO with a UCA receiver. Parameters; $d=15 \mu m, w=25\mu m,\ r=29.15\mu m,\ D=200\mu m^2/s,\ q=0.5,\ M=500,\ \eta[l]=\text{optimal value}$. The dashed lines shows the value ignoring the presence of other FARs. }
	\label{fig:pvsn1}
\end{figure}
Fig. \ref{fig:pvsn1} shows the variation of the probability of bit error with number $ N $ of FARs under various combining methods.
{\color{black} The dashed lines show $\mathrm{P_e}[l]$ for the case when other FAR's influence is ignored. This refers to the use of single FAR channel model expression \eqref{eq1rx} as the hitting probability of IM on each FAR in a system with multiple FARs.}

If mutual influence is ignored, the number of IMs captured per FAR remains constant with $N$. Therefore, the use of multiple FARs reduces decoding errors, and  $\mathrm{P_e}[l]$ is observed to reduce significantly with $N$. Due to mutual influence, the hitting probability at individual FARs decreases with $N$, and hence, the individual performance of each FAR degrades. Therefore, we observe a trade-off with $ N $ for hard combining rules. 
In soft combining, since the total number of IMs captured increases with $ N $, the probability of bit error always reduces with $ N $. We can verify that the bit error performance of the SIMO system is significantly better compared to a single input single output (SISO) system when complete information can be shared among the FARs.

Further, the mutual influence degrades the $\mathrm{P_e}[l]$ for a particular combining scheme. Therefore, it is essential to consider the effect of the mutual influence of FARs while designing the system. {\color{black}When channel models don't take into account the mutual influence between FARs, the number of received IMs in the FARs can be overestimated, resulting in an incorrect probability of bit error calculation.}

From the numerical results in Fig. \ref{fig:pvsetauca}-\ref{fig:pvsn1}  we can observe that soft combining is better than hard combining as expected. Furthermore, we observe that the majority rule performs better than OR, which performs better than the AND rule.

\section{Conclusions}
Molecular communication systems may consist of multiple receivers that are added to increase the communication systems' reliability and data rate. The exact analytical channel model for an MC system with multiple FARs does not exist in the literature due to the mathematical intractability. This work presents an approximate but simple hitting probability expression for the general $ N-$FAR case. The derived expressions are further simplified for various special cases, for example, when FARs are arranged in UCA and the transmitter is equidistant from the FARs. We also considered a SIMO system with a point transmitter and a receiver with UCA of $N$ FARs. Using the hitting probability expressions derived for the UCA case, we subsequently analyzed the error performance of cooperative detection in the considered SIMO system. We also compared the error performance under soft decision and hard decision rules (majority, OR, and AND rules). The derived hitting probability equations have many applications, including the analysis of interference and probability of bit error for  SIMO, MIMO, and cognitive systems.

\appendices
	\section{Derivation of Approximate Equation for the Fraction of IMs Absorbed at Each FAR in an $N-$FARs System}\label{anxr}
In \eqref{exsnxr}, we approximate the actual hitting point $ \z_j $ at FAR$_j$ by $ \y_j $, where $ \y_j $ is the closest point on FAR$ _j $ from the transmitter to get
\begin{align}
	\lhpone(s,r_i) -\lhp{i}(s)  =  \sum_{j=1,\ j\neq i}^{N} 
	s\lhp{j}(s) \lhpone(s,R_{ji}) , \ \ \forall\ i.\label{eqsnxr} 
\end{align}
Here $\lhp{i}(s)$ is the Laplace transforms of $\hp{i}(t)$.
The system of equations given in \eqref{eqsnxr} can be represented in matrix form as  $\Amat(s)\Pmat(s)=\bPmat(s)$, and the solution can be obtained as
\begin{align}
	\Pmat(s)=\Amat^{-1}(s)\bPmat_{}(s).\label{mateqn}
\end{align}
Finally, the hitting probability can be given by taking the inverse Laplace transform of both sides of \eqref{mateqn}. 
\section{Derivation of Lemma \ref{eboundc}}\label{ebound}
	Subtracting  \eqref{exsnxr} and \eqref{eqsnxr}, we obtain
	\begin{align}
		E_i(s)+\sum_{j=1,\ j\neq i}^{N} s\lhpone(s,R_{ji})E_j(s)=\sum_{j=1,\ j\neq i}^{N} 
		s\lehp{j}(s) \left( \lhpone(s,R_{ji})-\mathbb{E}_{\z_j}\left[\lhpone(s,\Rval_{ji})\right]\right)
		,\label{f2label}
	\end{align}
where $ E_i(s) =\lehp{i}(s)-\lhp{i}(s)$.
Now, stacking \eqref{f2label} for all $i$, we can write $ \mathbf{E}(s) =\mathbf{C}(s)\mathbf{X}(s)$, where,
$ \mathbf{E}(s)=\left[	E_1(s),E_2(s),\cdots, E_N(s)\right]$ , $ \mathbf{C}(s)= \Amat^{-1}(s)$, and $ \mathbf{X}(s)= \left[x_1(s),x_2(s),\cdots,x_N(s)\right]^{\mathrm{T}}$ with
\begin{align}
	x_k(s)=\sum_{j=1,\ j\neq k}^{N}s\lehp{j}(s) \left( \lhpone(s,R_{jk})-\mathbb{E}_{\z_j}\left[\lhpone(s,\Rval_{jk})\right]\right).\label{eqxx}
\end{align}
In particular, the absolute error for $i$th FAR is given by the absolute of the  $i$th element of $ \mathbf{E}(s) $ \ie
\begin{align}
	\left| E_i(s)\right| =\left| \sum_{k=1}^Nc_{ik}(s)x_k(s)\right| 
	\leq  \sum_{k=1}^N\left| c_{ik}(s)\right| \left| x_k(s)\right|,\label{eq:ev1}
\end{align}
where $c_{ik}$ denotes the $i,k$ element of $\mathbf{C}(s)$ and $x_k$ denotes the $k$th element of $\mathbf{X}(s)$.
Now, using the inequality $\lehp{j}(s)\leq \lhpone(s,r_j)  $ in \eqref{eqxx}, we get
\begin{align}
	\left| x_k(s)\right| 
	&\le\sum_{j=1,\ j\neq k}^{N}	s\lhpone(s,r_j)  F_{kj}(s),\label{new53}
\end{align}
where  $ F_{kj}(s) $ denotes the absolute error due to approximation of $ \z_j $ by $ \y_j $ defined as

$ 	F_{kj}(s)\overset{\Delta}{=}\mathbb{E}_{\z_j}\left[\left| \lhpone(s,\Rval_{jk}) -\lhpone(s,R_{jk}) \right|\right], \ \forall\ i $.

Now substituting \eqref{new53} in \eqref{eq:ev1} we get, 
\begin{align}
	\left| E_i(s)\right| 
	\leq  \sum_{k=1}^N\sum_{j=1,\ j\neq k}^{N}	\left| c_{ik}(s)\right|  F_{kj}(s) \ s\lhpone(s,r_j) .\label{erreq}
\end{align}
		Now, remaining is to compute  $ F_{ij}(s) $. Note that, 
		based on the hitting location $ \z_j $, $ \Rval_{ji}\leq R_{ji} $  or $ \Rval_{ji} >  R_{ji} $. \\
\textbf{Case 1:} When $ \Rval_{ji}\leq R_{ji} $ 
\begin{align}
	\mid\lhpone(s,\Rval_{ji})-\lhpone(s,R_{ji})\mid
	&=\stackrel{(a)}{\leq}\frac{a}{s}e^{a\sqrt{\frac{s}{D}}}e^{\left(-R_{ji}\sqrt{\frac{s}{D}}\right)}\left[\frac{e^{2a\sqrt{\frac{s}{D}}}}{\|\x_j-\x_i\|-a}-\frac{1}{R_{ji}}\right]=F_{ij}^{(1)}(s,z_j),\label{err1}
\end{align}
where $ (a) $ is due to the inequalities $ R_{ji}-\Rval_{ji}\leq \|\y_j-\z_j\|\leq 2a $, and $\Rval_{ji}\geq\|\x_j-\x_i\|-a  $. \\
\textbf{Case 2:} When $ R_{ji}\leq \Rval_{ji} $ 
\begin{align}
	\mid\lhpone(s,R_{ji})-\lhpone(s,\Rval_{ji})\mid
	&\stackrel{(b)}{\leq}\frac{a}{s}e^{a\sqrt{\frac{s}{D}}}e^{\left(-R_{ji}\sqrt{\frac{s}{D}}\right)}\left[\frac{e^{2a\sqrt{\frac{s}{D}}}}{R_{ji}}-\frac{1}{\|\x_j-\x_i\|+a}\right]=F_{ij}^{(2)}(s,z_j),\label{err2}
\end{align}
where $ (b) $ is due to the inequalities $ \Rval_{ji}-R_{ji}\leq \|\z_j-\y_j\|\leq 2a $, and $  R_{ji}\leq \Rval_{ji} $ and $\Rval_{ji}\leq\|\x_j-\x_i\|+a  $. \\
Therefore, $ F_{ij}(s)\leq \max\{F_{ij}^{(1)}(s,z_j),F_{ij}^{(2)}(s,z_j)\} $. Substituting the value of $ F_{ij}(s) $ in \eqref{erreq} gives Lemma \ref{eboundc}.
\section{Derivation of the Recursive Hitting Probability equation}\label{arec}
First notice that, for $ N=2 $, \cite[Eq. 19]{sabu2020a} and \cite[Eq. 20]{sabu2020a} state that
 \begin{align}
 	\lhpone{}(s,r_1)-\lhp{}(s,\x_1\mid \x_{2})=s\lhp{}(s,\x_2\mid \x_{1})\lhpone{}(s,R_{21}),\label{eq2rx1}\\
\text{and }
	\lhpone{}(s,r_2)-\lhp{}(s,\x_2\mid \x_{1})=s\lhp{}(s,\x_1\mid \x_{2})\lhpone{}(s,R_{12}).\label{eq2rx2}
\end{align}
Solving \eqref{eq2rx1} and \eqref{eq2rx2} gives
\begin{align}
	\lhp{}(s,\x_1\mid \x_{2})=\frac{\lhpone{}(s,r_1)-s\lhpone{}(s,r_2)\lhpone{}(s,R_{21})}{1-s^2\lhpone{}(s,R_{12})\lhpone{}(s,R_{21})}.\label{lp2rx}
\end{align}
 
Now, let us add one more FAR to have  a system of $ N=3 $ FARs. Here, the hitting probability of an IM that was supposed to hit FAR$ _{1} $ in the presence of FAR$ _2 $ within time $ t $ but is hitting FAR$ _{3} $ is given by 
\begin{align}
	&\hp{}(t,\x_1{\mid} \x_{2}) -\hp{}(t,\x_1{\mid} \{\x_j\}_{j=2}^3)   =  
	\int_{0}^{t} \frac{\partial\hp{}(\tau,\x_3{\mid} \{\x_j\}_{j=1}^2)}{\partial \tau}\hpone{}(t-\tau,\x_1{-}\y_3{\mid} \x_{2}{-}\y_3) \dd\tau , \label{req3rxeqn1}
\end{align}
Similarly, the hitting probability of an IM that was supposed to hit FAR$ _{3} $ in the presence of FAR$ _2 $ within time $ t $ but is hitting  FAR$ _{1} $ is given by 
\begin{align}
	&\hp{}(t,\x_3{\mid} \x_{2}) -\hp{}(t,\x_3{\mid} \{\x_j\}_{j=1}^2)   =   
	\int_{0}^{t} \frac{\partial \hp{}(\tau,\x_1{\mid} \{\x_j\}_{j=2}^3)}{\partial \tau}\hpone{}(t-\tau,\x_3{-}\y_1{\mid} \x_{2}{-}\y_1) 	\dd\tau   \label{req3rxeqn2}
\end{align}
 Now, taking Laplace transform on both sides of \eqref{req3rxeqn1} and \eqref{req3rxeqn2} gives
 \begin{align}
	\lhp{}(s,\x_1{\mid} \x_{2})-\lhp{}(s,\x_1{\mid} \{\x_j\}_{j=2}^3)=&s\lhp{}(s,\x_3{\mid} \{\x_j\}_{j=1}^2)
	\lhpone{}(s,\x_1{-}\y_3{\mid} \x_{2}{-}\y_3),\label{eq3rx1}\\
\text{and }
	\lhp{}(s,\x_3{\mid} \x_{2}){-}\lhp{}(s,\x_3{\mid} \{\x_j\}_{j=1}^2){=}&s\lhp{}(s,\x_1{\mid} \{\x_j\}_{j=2}^3)
	\lhpone{}(s,\x_3{-}\y_1{\mid} \x_{2}{-}\y_1).\label{eq3rx2}
\end{align}
Solving \eqref{eq3rx1} and \eqref{eq3rx2} gives
\begin{align}
	\lhp{}(s,\x_1{\mid} \{\x_j\}_{j=2}^3)=&
	\frac{
	\left(\lhp{}\left(s,\x_1\mid {\x}_{2}\right)\right.-\left. s\lhp{}\left(s,\x_3\mid {\x}_{2}\right)\lhp{}\left(s,\x_1-\y_3\mid {\x_{2}-\y_3}\right)\right)}
	{1{-}s^2\lhp{}\left(s,\x_3{-}\y_1{\mid} {\x_{2}{-}\y_1}\right)\lhp{}\left(s,\x_1{-}\y_3{\mid} {\x_{2}{-}\y_3}\right)}
	.\label{lp3rx}
\end{align}
The above approach can be extended for any value of $ N $ to get the following set of equations.
 \begin{align*}
	\lhp{}(s,\x_1{\mid} \{\x_j\}_{j=2}^{N-1})&-\lhp{}(s,\x_1{\mid} \{\x_j\}_{j=2}^N)=s\lhp{}(s,\x_{N}{\mid} \{\x_j\}_{j=1}^{N-1})
	\lhp{}(s,\x_1-\y_{N}{\mid} \{\x_j\}_{j=2}^{N-1}-\y_N),\\
\text{and }
	\lhp{}(s,\x_{N}{\mid }\{\x_j\}_{j=2}^{N-1})&-\lhp{}(s,\x_{N}{\mid} \{\x_j\}_{j=1}^{N-1})=s\lhp{}(s,\x_1{\mid} \{\x_j\}_{j=2}^N)
\lhp{}(s,\x_N-\y_{1}{\mid} \{\x_j\}_{j=2}^{N-1}-\y_1).
\end{align*}
Solving these equations simultaneously  gives \eqref{rexnresub}. Taking the inverse Laplace transform of \eqref{rexnresub} gives \eqref{receq}.

\section{Derivation of Corollary \ref{cnrx1}}\label{anrx1}
\textbf{Recursive method:}
On substituting $ r_i=r,\ \forall i  $ and $ R_{ij}=R,\ \forall i,j,\  i\neq j$ in \eqref{rexnresub} for $ N=4 $, owing to symmetry $ \lhp{}\left(s,\x_i\mid \{\x_j\}_{j=i,j\neq 1}^{N-1}\right){=} \lhp{}\left(s,\x_1\mid \{\x_j\}_{j=2}^{N-1}\right), \forall i$  and $ \lhp{}\left(s,\x_i{-}\y_j{\mid} {\{\x_j\}_{j=i,j\neq 1}^{j-1}{-}\y_j}\right){=}\\ \lhp{}\left(s,\x_1{-}\y_N{\mid} {\{\x_j\}_{j=2}^{N-1}{-}\y_N}\right), \forall i,j, \ i\neq j $. Note that the sub system of $ 3 $ FARs is also symmetric. Therefore, the common terms cancel out in \eqref{rexnresub}, and the Laplace transform of the hitting probability is given as
\begin{align}
	\lhp{}(s,\x_1\mid \{\x_j\}_{j=2}^4)=\frac{\lhp{}(s,\x_1\mid \{\x_j\}_{j=2}^3)}{1+s\lhp{}(s,\x_1-\y_4\mid \{\x_j\}_{j=2}^3-\y_4)},\label{tetra1}
\end{align}
where
\begin{align}
	\lhp{}(s,\x_1\mid \x_{2})&=\frac{\lhp{}(s,\x_1\mid \{\x_j\}_{j=2}^3)}{1+s\lhp{}(s,\x_1-\y_3\mid \x_{2}-\y_3)}
	=\frac{\lhpone{}(s,r)}{1+2s\lhpone{}(s,R)},\label{tetra2}
\end{align}
and 
\begin{align}
\lhp{}(s,\x_1-\y_4\mid \{\x_j\}_{j=2}^3-\y_4)=\frac{\lhpone{}(s,R)}{1+2s\lhpone{}(s,R)}.\label{tetra3}
\end{align}
Substituting \eqref{tetra2} and \eqref{tetra3} in \eqref{tetra1} gives
\begin{align}
	\lhp{}(s,\x_1\mid \{\x_j\}_{j=2}^4)&=\frac{\lhpone(s,r)}{1+3s\lhpone(s,R)}
	=\lhpone(s,r)\times\sum_{n=0}^{\infty}(-3)^n\left(s\lhpone(s,R)\right)^n.
\end{align}
Substituting the value of $ \lhpone(s,x),\ x\in\{r,R\} $ in the above equation using \eqref{lapeq} and taking its inverse Laplace transform gives Corollary \ref{cnrx1}.

\textbf{Direct method:} On substituting $ r_i=r,\ \forall i  $ and $ R_{ij}=R,\ \forall i,j,\  i\neq j$ in \eqref{eqnxr} for $ N=4 $ gives
\begin{align}
	\lhp{}(s,\x_1\mid \{\x_j\}_{j=2}^4)&=\frac{\lhpone(s,r)\left(s\lhpone(s,R)-1\right)}{3s^2\lhpone(s,R)^2-2s\lhpone(s,R)-1}\nonumber\\
	&=\frac{\lhpone(s,r)}{1+3s\lhpone(s,R)}
	=\lhpone(s,r)\times\sum_{n=0}^{\infty}(-3)^n\left(s\lhpone(s,R)\right)^n.
\end{align}
Substituting the value of $ \lhpone(s,x),\ x\in\{r,R\} $ in the above equation using \eqref{lapeq} and taking its inverse Laplace transform gives Corollary \ref{cnrx1}.
  \section{Hitting probability for a SIMO system with $N$ UCA of FARs}\label{ConjPr}
  
  For each FAR in UCA, $ r_i=r $, and $ \lhp{i}(s)=\lhp{j}(s) $. Substituting these values in \eqref{eqsnxr},  simplifying further and applying inverse Laplace transform, we get
  	\begin{align}
  		\hp{i}(t)&=\invlaplace{}{
  			\frac{\lhpone(s,r)}{1+\sum_{j=1,\ j\neq i}^{N}s\lhpone(s,R_{ji})} }{}\stackrel{(a)}{=}\invlaplace{}{
  			\frac{\lhpone(s,r)}{1+\sum_{m=1}^{\delta-1}2s\lhpone(s,R_m)+bs\lhpone(s,R_\delta)} }{},\label{lapucan}
  	\end{align}
  	where $ b=1 $ for even $ N $ and $ b=2$ for odd $ N $. $ (a) $ is due to $ R_i=R_{ji}=R_{j(N-i)}, \ 1\leq i\leq \delta $. Recall that $ R_m $ is the distance to $ m $th closest neighbor.\\
  	For odd N, \eqref{lapucan} can be further simplified as
  {\small	\begin{align}
  		\hp{1}(t)&=\invlaplace{}{
  			\frac{\lhpone(s,r)}{1+\sum_{i=1}^{\delta}2s\lhpone(s,R_i)} }{}\label{lNrxodd}\\
  		&\stackrel{(b)}{=}\invlaplace{}{\lhpone(s,r)\times\sum_{n=0}^{\infty}(-2)^n\sum_{k_1+k_2+\cdots+k_\delta=n} {n \choose k_1, k_2, \ldots, k_m}
  			\prod_{m=1}^\delta \left(s\lhpone(s,R_m)\right)^{k_m}}.
  	\end{align}}
  	Taking the inverse Laplace transform of the above equation gives \eqref{NUCAO}.\\
  	For even N, \eqref{lapucan} can be further simplified as
  	{\small\begin{align}	
  		\hp{1}(t)&=\invlaplace{}{\frac{\lhpone(s,r)}{1+\sum_{i=1}^{\delta-1}2s\lhpone(s,R_i)+s\lhpone(s,R_\delta)} }\label{lNrxeven}\\
  		&\stackrel{(c)}{=}\invlaplace{}{\lhpone(s,r)\times\sum_{n=0}^{\infty}(-2)^{n-k_\delta}\mkern-32mu\sum_{k_1+k_2+\cdots+k_\delta=n} {n \choose k_1, k_2, \ldots, k_m}
  			\prod_{m=1}^\delta \left(s\lhpone(s,R_m)\right)^{k_m}}.
  	\end{align}}
  	Taking the inverse Laplace transform of the above equation gives \eqref{NUCAE}. In the above, steps $ (b)  $ and $ (c) $ are obtained by applying multinomial theorem, \ie 
  	$ (x_1 + x_2  + \cdots + x_m)^n
  	= \sum_{k_1+k_2+\cdots+k_m=n} {n \choose k_1, k_2, \ldots, k_m}
  	\prod_{i=1}^m x_i^{k_i} $. 
  
\section{Signal gain of a two and three FAR system}\label{adg2rx}
  \textbf{$ 2-$FAR case}: 
  For a $ 2-$FAR system, the signal gain is given by
  \begin{align}
  	g(t) = \frac{2\hp{1}(t)}{\hpone(t,r)},\label{dg2rx}
  \end{align}
  When the FARs are arranged as UCA and $ r_i=r $, \eqref{lp2rx} becomes
  \begin{align}
  	\lhp{1}(s)   &=\lhpone(s,r)\frac{1}{1+s\lhpone(s,R)} \stackrel{(a)}{=}\lhpone(s,r)\times\sum_{n=0}^{\infty}(-1)^n\left(s\lhpone(s,R)\right)^n.\label{lap2rx}
  \end{align}
  Note that $ (a) $ is from the identity $ (1+x)^{-1}=\sum_{n=0}^{\infty}(-1)^nx^n $. Substituting \eqref{lapeq} in \eqref{lap2rx} and taking the inverse Laplace transform gives 
  \begin{align}
  	\hp{1}(t) &=\frac{a}{r}\sum_{n=0}^{\infty}\frac{(-a)^n}{R^n}\erfc\left(\frac{r-a+n(R-a)}{\sqrt{4Dt}}\right).\label{eq2rxhp1}
  \end{align}
  Rearranging \eqref{eq2rxhp1} gives
  \begin{align}
  	&\hp{1}(t)=\hpone(t,r)-\lambda(t,r,1).\label{eq2rxhp2}
  \end{align}
  Now substituting \eqref{eq2rxhp2} and \eqref{eq1rx} in \eqref{dg2rx} gives the desired result. 
  
  \textbf{$ 3-$FAR case}:
 The derivation  of $ g(t) $ for $ N=3 $ FAR case is similar to that of $ g(t) $ for $ N=2 $ FAR case given above. Rearranging the terms in \eqref{UCA3} and substituting it in \eqref{eqgain} gives the desired result.

	\bibliographystyle{IEEEtran}
	\bibliography{nrx_ref}
\end{document}